\newcolumntype{g}{>{\columncolor[gray]{0.85}}c}
\newcommand{\SN}{\mathcal{N}}
\newcommand{\SG}{\mathcal{G}}
\newcommand{\SL}{\mathcal{L}}
\newcommand{\SX}{X}
\newcommand{\BF}{\mathbb{F}}
\newcommand{\BR}{\mathbb{R}}
\newcommand{\BE}{\mathbb{E}}
\newcommand{\BG}{\mathbb{G}}
\newcommand{\BS}{\mathbb{S}}
\newcommand{\BX}{X}
\newcommand{\FG}{\mathsf{G}}
\newcommand{\FV}{\mathsf{V}}
\newcommand{\FE}{\mathsf{E}}
\newcommand{\FC}{\mathsf{C}}
\newcommand{\FL}{\mathsf{L}}
\newcommand{\FX}{X}
\newcommand{\IPSLP}{IRPS-LP}
\newcommand{\R}{\mathbb{R}}
\newcommand{\N}{\mathbb{N}}
\newcommand{\la}{\langle}
\newcommand{\ra}{\rangle}
\newcommand{\T}{\top}
\newcommand{\eins}{\mathbbmss{1}}
\newcommand{\dcup}{\dot{\cup}}
\DeclareMathOperator*{\argmin}{arg min}
\DeclareMathOperator*{\argmax}{arg max}
\DeclareMathOperator*{\conv}{conv}
\providecommand{\abs}[1]{\lvert#1\rvert}
\definecolor{fettrot}{RGB}{255,10,10}
\newtheorem{definition}{Definition}
\newtheorem{theorem}{Theorem}
\newtheorem{lemma}{Lemma}
\newtheorem*{lemma*}{Lemma}
\newtheorem{proposition}{Proposition}
\newtheorem*{proposition*}{Proposition}
\newtheorem*{theorem*}{Theorem}
\newtheorem{remark}{Remark}
\declaretheoremstyle[bodyfont=\normalfont]{normalbody}
\declaretheorem[style=normalbody,name=Example]{example}
\newcommand{\myparagraph}[1]{\paragraph{#1}}
\def\smallunderbrace#1{\mathop{\vtop{\m@th\ialign{##\crcr
   $\hfil\displaystyle{#1}\hfil$\crcr
   \noalign{\kern3\p@\nointerlineskip}%
   \tiny\upbracefill\crcr\noalign{\kern3\p@}}}}\limits}
\begin{document}

\title{A Dual Ascent Framework for Lagrangean Decomposition of Combinatorial Problems}


\author{Paul Swoboda, Jan Kuske, Bogdan Savchynskyy}

\maketitle

\begin{abstract}
 We propose a general dual ascent framework for Lagrangean decomposition of combinatorial problems.
 Although methods of this type have shown their efficiency for a number of problems, so far there was no general algorithm applicable to multiple problem types. 
 In this work, we propose such a general algorithm. It depends on several parameters, which can be used to optimize its performance in each particular setting.
 We demonstrate efficacy of our method on graph matching and multicut problems, where it outperforms state-of-the-art solvers including those based on subgradient optimization and off-the-shelf linear programming solvers.
\end{abstract}

\section{Introduction}

Computer vision and machine learning give rise to a number of powerful computational models.  
It is typical that inference in these models reduces to non-trivial combinatorial optimization problems.
For some of the models, such as conditional random fields (CRF), powerful specialized solvers like~\cite{TRWSKolmogorov,SRMPKolmogorov,boykov2001fast,lempitsky2010fusion} were developed.  
In general, however, one has to resort to off-the-shelf integer linear program (ILP) solvers like CPLEX~\cite{cplex} or Gurobi~\cite{gurobi}. 
Although these solvers have made a tremendous progress in the past decade, the size of the problems they can tackle still remains a limiting factor for many potential applications, as the running time scales super-linearly in the problem size. The goal of this work is to partially fill this gap between practical requirements and existing computational methods.

It is an old observation that many important optimization ILPs can be efficiently decomposed into easily solvable combinatorial sub-problems~\cite{guignard1987lagrangean-TandA}. The convex relaxation, which consists of these sub-problems coupled by linear constraints is known as Lagrangean or dual decomposition~\cite{guignard1987lagrangean,DualDecompositionKomodakis}. Although this technique can be efficiently used in various scenarios to find approximate solutions of combinatorial problems, it has a major drawback: In the most general setting only slow (sub)gradient-based techniques~\cite{lemarechal1992lagrangian,necoara2008application,DualDecompositionKomodakis,kappes2012bundle,savchynskyy2011study} can be used for optimization of the corresponding convex relaxation. 

In the area of conditional random fields, however, it is well-known~\cite{OpenGMBenchmark} that message passing or dual (block-coordinate) ascent algorithms (like e.g.~TRW-S~\cite{TRWSKolmogorov}) significantly outperform (sub)gradient-based methods. 
Similar observations were made much earlier in~\cite{ribeiro1986solving} for a constrained shortest path problem.

Although dual ascent algorithms were proposed for a number of combinatorial problems (see the related work overview below), there is no general framework, which would 
(i)~give a generalized view on the properties of such algorithms and more importantly (ii)~provide tools to easily construct such algorithms for new problems. Our work provides such a framework.

\myparagraph{Related Work}
Dual ascent algorithms optimize a dual problem and guarantee monotonous improvement (non-deterioration) of the dual objective. The most famous examples in computer vision are block-coordinate ascent (known also as {\em message passing}) algorithms like TRW-S~\cite{TRWSKolmogorov} or MPLP~\cite{MPLP} for maximum a posteriori inference in conditional random fields~\cite{OpenGMBenchmark}.

To the best of our knowledge the first dual ascent algorithm addressing integer linear programs belongs to Bilde and Krarup~\cite{bilde1977sharp} (the corresponding technical report in Danish appeared $1967$). In that work an uncapacitated facility location problem was addressed. A similar problem (simple plant location) was addressed with an algorithm of the same class in~\cite{guignard1988lagrangean}. 
In $1980$ Fisher and Hochbaum~\cite{fisher1980database} constructed a dual ascent-based algorithm for a problem of database location in computer networks, which was used to optimize the topology of Arpanet~\cite{ArpaNet}, predecessor of Internet.
The generalized linear assignment problem was addressed by the same type of algorithms in~\cite{fisher1986multiplier}. The Authors considered a Lagrangean decomposition of this problem into multiple knapsack problems, which were solved in each iteration of the method. An improved version of this algorithm was proposed in~\cite{guignard1989technical}. Efficient dual ascent based solvers were also proposed for the min-cost flow in~\cite{BeliefPropagationNetworkFlow}, for the set covering and the set partitioning problems in~\cite{fisher1990optimal} and the resource-constrained minimum weighted arborescence problem in~\cite{guignard1990application}.
The work~\cite{guignard1989application} describes basic principles for constructing dual ascent algorithms. Although the authors provide several examples, they do not go beyond that and stick to the claim that these methods are structure dependent and problem specific.

The work~\cite{CombinatorialOptimizationMaxProductDuchi} suggests to use the max-product belief propagation~\cite{weiss2001optimality} to decomposable optimization problems. However, their algorithm is neither monotone nor even convergent in general.

In computer vision, dual block coordinate ascent algorithms for Lagrangean decomposition of combinatorial problems were proposed for multiple targets tracking~\cite{arora2013higher}, graph matching (quadratic assignment) problem~\cite{HungarianBP} and inference in conditional random fields~\cite{TRWSKolmogorov,SRMPKolmogorov,MPLP,Werner07,Werner10,AdaptiveDiminishingSmoothingSavchynskyy,ConvergentDecompositionSolversJancsary,ConvergentMessagePassingAUnifiedView,SubproblemTreeCalibrationWang}. From the latter, the TRW-S algorithm~\cite{TRWSKolmogorov} is among the most efficient ones for pairwise conditional random fields according to~\cite{OpenGMBenchmark}. The SRMP algorithm~\cite{SRMPKolmogorov} generalizes TRW-S to conditional random fields of arbitrary order. In a certain sense, our framework can be seen as a generalization of SRMP to a broad class of combinatorial problems.

\myparagraph{Contribution.} 
We propose a new dual ascent based computational framework for combinatorial optimization.
To this end we:\\
(i) Define the class of problems, called {\em integer-relaxed pairwise-separable linear programs} (\IPSLP), our framework can be used for. Our definition captures Lagrangean decompositions of many known discrete optimization problems (Section~\ref{sec:FactorLP}). \\
(ii) Give a general monotonically convergent message-passing algorithm for solving \IPSLP, which in particular subsumes several known solvers for conditional random fields
(Section~\ref{sec:MessagePassingAlgorithm}).\\
(iii) Give a characterization of the fixed points of our algorithm, which subsumes such well-known fixed point characterizations as {\em weak tree agreement}~\cite{TRWSKolmogorov} and {\em arc-consistency}~\cite{Werner07} (Section~\ref{sec:FixedPoints}).

We demonstrate efficiency of our method by outperforming state-of-the-art solvers for two famous special cases of \IPSLP, which are widely used in computer vision: the multicut and the graph matching problems.
(Section~\ref{sec:experiments}).

A C++-framework containing the above mentioned solvers and the datasets used in experiments are available under \url{http://github.com/pawelswoboda/LP_MP}.

We give all proofs in the supplementary material.

\myparagraph{Notation.}
Undirected graphs will be denoted by $G=(V,E)$, where $V$ is a finite {\em node set} and $E\subseteq{{V}\choose{2}}$ is {\em the edge set}.
The set of neighboring nodes of $v \in V$ w.r.t. graph $G$ is denoted by $\SN_G(v) := \{ u: uv \in E\}$.
The convex hull of a set $\SX \subset \R^n$ is denoted by $\conv(\SX)$. Disjoint union is denoted by $\dot{\cup}$.

\section{Integer-Relaxed Pairwise-Separable Linear Programs (\IPSLP)}
\label{sec:FactorLP}
%
%
Combinatorial problems having an objective to minimize some cost $\theta(x)$ over a set $X\subseteq\{0,1\}^n$ of binary vectors often have a decomposable representation as 
$\min_{x_i\in X_i\atop i=1,\dots,k} \sum_{i=1}^k \la \theta_i,x_i \ra$ for $X_i\subseteq\{0,1\}^{d_i}$ being sets of binary vectors, typically corresponding to subsets of the coordinates of $X$. This decomposed problem is equivalent to the original one under a set of linear constraints $A_{(i,j)}x_i = A_{(j,i)} x_j$, which guarantee the mutual consistency of the considered components. Replacing $X_i$ by its convex hull $\conv(X_i)$ and therefore switching to real-valued vectors from binary ones one obtains a {\em convex relaxation}\footnote{More precisely, this is a linear programming relaxation, since a convex hull of a finite set can be represented in terms of linear inequalities} of the problem, which reads:
\begin{equation}
\label{eq:FactorGraphPrimal}
\min_{\mu \in \Lambda_{\BG}} \sum_{i=1}^k \la \theta_i,\mu_i \ra\,,\ \text{where}\ \Lambda_{\BG}\ \text{is defined as}
\end{equation}
{\small
\begin{equation}\label{eq:Lambda_G}
\Lambda_{\BG} :=
\left\{
(\mu_1\dots\mu_k)\left|
\begin{array}{ll}
\mu_i \in \conv(\BX_i) & i\in \BF \\
A_{(i,j)}\mu_i = A_{(j,i)} \mu_j & \forall ij \in \BE
\end{array}
\right\}\right..
\end{equation}
}
Here $\BF:= \{1,\ldots,k\}$ are called \emph{factors} of the decomposition
and $\BE \subseteq \begin{pmatrix} \BF \\ 2 \end{pmatrix}$ are called \emph{coupling constraints}.
The undirected graph $\BG = (\BF,\BE)$ is called \emph{factor graph}.
We will use variable names $\mu$ whenever we want to emphasize $\mu_i \in \conv(\BX_i)$ and $x$ whenever $x_i \in \BX_i$, $i\in\BF$.


\begin{definition}[IRPS-LP]
   \label{def:IPSLP}
   Assume that for each edge $ij \in \BE$ the matrices of the coupling constraints $A_{(i,j)}, A_{(j,i)}$ are such that
$A_{(i,j)} \in \{0,1\}^{K \times d_i}$ and $A_{(i,j)} x_i \in \{0,1\}^K$  $\forall x_i \in \BX_i$ for some $K \in \N$, analogously for $A_{(j,i)}$.
The problem $\min_{\mu \in \Lambda_{\BG}} \sum_{i\in\BF} \la \theta_i, \mu_i \ra$ is called an 
\emph{\textbf{I}nteger-\textbf{R}elaxed \textbf{P}airwise-\textbf{S}eparable \textbf{L}inear \textbf{P}rogram}, abbreviated by \emph{\IPSLP}.
\end{definition}

In the following, we give several examples of \IPSLP.
To distinguish between notation for the factor graph of \IPSLP, where we stick to bold letters (such as $\BG$, $\BF$, $\BE$) we will use the straight font (such as $\FG$, $\FV$, $\FE$) for the graphs occurring in the examples.

\begin{example}[MAP-inference for CRF]
   \label{example:LocalPolytope}
   A conditional random field is given by a graph $\FG = (\FV,\FE)$, a discrete label space $\FX = \prod_{u \in\FV} \FX_u$, unary $\theta_u : \FX_u \rightarrow \R$ and pairwise costs $\theta_{uv} : \FX_u \times \FX_v \rightarrow \R$ for $u \in \FV$, $uv \in \FE$. 
   We also denote $X_{uv}:=X_u\times X_v$.
The associated {\em maximum a posteriori (MAP)-inference problem} reads
\begin{equation}
\min_{x \in \FX} \sum\nolimits_{u \in \FV} \theta_u(x_u) + \sum\nolimits_{uv \in \FE} \theta_{uv}(x_{uv})\,,
\end{equation}
where $x_u$ and $x_{uv}$ denote the components corresponding to node $u \in \FV$ and edge $uv\in\FE$ respectively.
   The well-known local polytope relaxation~\cite{Werner07} can be seen as an \IPSLP\ by setting $\BF = \FV \cup \FE$, that is associating to each node $v \in \FV$ \emph{and} each edge $uv \in \FE$ a factor, and introducing two coupling constraints for each edge of the graphical model, i.e. $\BE = \{\{u,uv\}, \{v,uv\} : uv \in \FE\}$. 
    For the sake of notation we will assume that each label $s\in\ X_u$ is associated a unit vector $(0,\dots,0,\smallunderbrace{1}_{\clap{s}},0\dots,0)$ with dimensionality equal to the total number of labels $|X_u|$ and $1$ on the $s$-th position. Therefore, the notation $\conv(X_u)$ makes sense as a convex hull of all such vectors. 
        After denoting an $N$-dimensional {\em simplex} as $\Delta_N:=\{\mu\in\BR_+^N\colon \sum_{i=1}^N \mu_i=1 \}$
the resulting relaxation reads
\begin{equation}\label{eq:ILP:objective}
 \min_{\mu\in\FL_{\FG}}  \quad\langle \theta,\mu\rangle:=\sum_{u\in\FV}\la \theta_u,\mu_u\ra + \sum_{uv\in\FE}\la \theta_{uv},\mu_{uv}\ra 
\end{equation}
in the overcomplete representation~\cite{WainwrightBook} and $\FL_{\FG}$ is defined as \\
{\small
\begin{equation}\label{eq:ILP:linear-constraints}
\hspace{-10pt} \begin{array}{ll}
  \underline{\mu_u\in\conv(\SX_{u}):} & \mu_u\in\Delta_{|X_u|},  u\in\FV  \\
  \underline{\mu_{uv}\in\conv(\SX_{uv}):} & \mu_{uv}\in\Delta_{|X_{uv}|}, uv\in\FE\\
  \underline{A_{(uv,u)}\mu_{uv}=A_{(u,uv)}\mu_u:} & \hspace{-7pt}\sum\limits_{x_v\in X_v}\hspace{-3pt}\mu_{uv}(x_u,x_v) = \mu_{u}(x_u),\\ 
   & uv\in\FE, (x_u,x_v)\in X_{uv},\\ & u\in uv, x_u\in X_u\,.
\end{array} 
\end{equation}
}
Here $\mu_u(x_u)$ and $\mu_{uv}(x_u,x_v)$ denote those coordinates of vectors $\mu_u$ and $\mu_{uv}$, which correspond to the label $x_u$ and the pair of labels $(x_u,x_v)$ respectively.
\end{example}   
 
\begin{example}[Graph Matching]\label{example:graph-matching}
        The graph matching problem, also known as {\em quadratic assignment}~\cite{TheQuadraticAssignmentProblem} or {\em feature matching}, can be seen as a MAP-inference problem for CRFs (as in Example~\ref{example:LocalPolytope}) equipped with additional constraints:
        The label set of $\FG$ belongs to a \emph{universe}~$\SL$, i.e. $X_u \subseteq \SL$ $\forall u \in \FV$ and each label $s \in \SL$ can be assigned \emph{at most once}. The overall problem reads
\begin{equation}
   \min_{x} \sum_{u \in \FV} \theta_u(x_u) + \sum_{uv \in \FE} \theta_{uv}(x_u,x_v) \ \text{s.t.} \ x_u \neq x_v \forall u\neq v\,.
\end{equation}
        Graph matching is a key step in many computer vision applications, among them tracking and image registration, whose aim is to find a one-to-one correspondence between image points.
        For this reason, a large number of solvers have been proposed in the computer vision community~\cite{CombinatorialOptimizationMaxProductDuchi,CoveringTreesLowerBoundQuadraticAssignmentJarkony,HungarianBP,GraphMatchingDDTorresaniEtAl,SpectralTechniqueAssignmentLeordeanu,MRFSemidefiniteTorr,ProbabilisticSubgraphMatchingSchellewald,GraduatedAssignmentGold,FactorizedGraphMatching,LocalSparseMatching,IntegerFixedPointGraphMatching,RandomWalksForGraphMatching}.
        Among them two recent methods~\cite{GraphMatchingDDTorresaniEtAl,HungarianBP} based on Lagrangean decomposition show superior performance and provide lower bounds for their solutions. The decomposition we describe below, however, differs from those proposed in~\cite{GraphMatchingDDTorresaniEtAl,HungarianBP}.

Our \IPSLP\ representation for graph matching consists of two blocks: (i) the CRF itself (which further decomposes into node- and edge-subproblems with variables $(\mu_u)_{u \in \FV}$ and (ii) additional {\em label-factors} keeping track of nodes assigned the label $s$. We introduce these label-factors for each label ${s \in \SL}$. The set of possible configurations of this factor $X_s := \{u \in \FV : s \in \SX_u\} \cup \{\#\}$ consists of those nodes $u \in \FV$ which can be assigned the label $s$ and an additional dummy node~$\#$.
The dummy node $\#$ denotes non-assignment of the label $s$ and is necessary, as not every label needs to be taken.
As in Example~\ref{example:LocalPolytope}, we associate a unit binary vector with each element of the set $X_s$, and $\conv(X_s)$ denotes the convex hull of such vectors.
The set of factors becomes $\BF = \FV \dcup \FE \dcup \SL$,
with the set 
$\BE = \{\{u,uv\}  ,  \{v, uv\} : uv \in \FE\} \cup \{\{u,l\} : u \in \FV, l \in \SX_u \}$ of the factor-graph edges.
The resulting \IPSLP\ formulation reads
\begin{align} \label{eq:LabelFactorsIPSLP}
& \min_{\mu,\tilde\mu}\sum_{u\in\FV}\la \theta_u,\mu_u\ra + \sum_{uv\in\FE}\la \theta_{uv},\mu_{uv}\ra + \sum_{s\in\SL}\la\tilde\theta_s,\tilde\mu_s\ra\\
& \begin{array}{ll}
   \mu \in \FL_{\FG} \\
   \tilde\mu_{s} \in \conv(X_s), & s \in \SL\\
   \mu_{u}(s) = \tilde\mu_{s}(u), & s \in X_u\,.
  \end{array} \nonumber
\end{align}
Here we introduced 
(i) auxiliary variables $\tilde\mu_{s}(u)$ for all variables $\mu_u(s)$ and 
(ii) auxiliary node costs $\tilde\theta_{s} \equiv 0$ $\forall s \in \SL$, which may take other values in course of optimization. 
Factors associated with the vectors $\mu_u$ and $\mu_{uv}$ correspond to the nodes and edges of the graph $\FG$ (node- and edge-factors), as in Example~\ref{example:LocalPolytope} and are coupled in the same way. Additionally, factors associated with the vectors~$\tilde\mu_s$ ensure that the label $s$ can be taken at most once. These label-factors are coupled with node-factors (last line in~\eqref{eq:LabelFactorsIPSLP}).
\end{example}


\begin{example}[Multicut]\label{example:multicut}
        The multicut problem (also known as correlation clustering) for an undirected weighted graph $\FG=(\FV,\FE)$
is to find a partition $(\Pi_1,\ldots,\Pi_k)$, $\Pi_i\subseteq \FV$, $\FV=\dot{\cup}_{i=1}^{k}\Pi_i$ of the graph vertexes, such that the total cost of edges connecting different components is minimized.
The number $k$ of components is not fixed but is determined by the algorithm.
See Fig.~\ref{fig:Multicut} for an illustration.
        Although the problem has numerous applications in computer vision~\cite{BreakAndConquerAlushGoldberger,ImageSegmentationClosednessAndres,ConnectomicsAndres,PlanarCorrelationClusteringYarkony} and beyond~\cite{DeduplicationMulticut,ClusteringQueryRefinement,ClusteringSparseGraphs,CorrelationClusteringMapReduce}, there is no scalable solver, which could provide optimality bounds.
Existing methods are either efficient primal heuristics~\cite{MLforCoreferenceResolution,ImprovingMLforCoreferenceResolution,ClusteringAggregation,ConversationDisentanglementCC,BoundingAndComparingCorrelationClustering,FusionMoveCC,CGC} or combinatorial branch-and-bound/branch-and-cut/column generation algorithms, based on off-the-shelf LP solvers~\cite{KappesMulticut,HigherOrderSegmentationByMulticuts,SungwoongHigherOrderCorrelationClustering,PlanarCorrelationClusteringYarkony}.
Move-making algorithms do not provide lower bounds, hence, one cannot judge their solution quality or employ them in branch-and-bound procedures. 
Off-the-shelf LP solvers on the other hand scale super-linearly, limiting their application in large-scale problems.

Instead of directly optimizing over partitions (which has many symmetries making optimization difficult in a linear programming setting), we follow~\cite{ChopraMulticut} and formulate the problem in the edge domain.
Let $\theta_e$, $e\in\FE$ denote the cost of graph edges and let $\FC$ be the set of all cycles of the graph~$\FG$. 
Each edge that belongs to different components is called a \emph{cut edge}.
The multicut problem reads
\begin{align}\label{equ:multicut-def}
 \hspace{-10pt}\min_{x_{\FE}\in\{0,1\}^{|\FE|}}\sum_{e\in\FE}\theta_{e}x_{e}\,,\ \ \text{s.t.}\
 \forall \FC\ \forall e'\in \FC \colon \sum_{e\in \FC\backslash\{e'\}}\hspace{-5pt}x_{e} \ge x_{e'}  \,.
\end{align}
Here $x_e=1$ signifies a cut edge and the inequalities force each cycle to have none or at least two cut edges.
        The formulation~\eqref{equ:multicut-def} has exponentially many constraints. However, it is well-known that it is sufficient to consider only chordless cycles~\cite{ChopraMulticut} in place of the set $\FC$ in~\eqref{equ:multicut-def}. Moreover, the graph can be triangulated by adding additional edges with zero weights and therefore the set of chordless cycles reduces to edge triples. Such triangulation is refered to as {\em chordal completion} in the literature~\cite{GareyIntractabilityNP}.  The number of triples is cubic, which is still too large for practical efficiency and therefore violated constraints are typically added to the problem iteratively in a cutting plane manner~\cite{KappesMulticut,HigherOrderSegmentationByMulticuts}. To simplify the description, we will ignore this fact below and consider all these cycles at once.
        Assuming a triangulated graph and redefining $\FC$ as the set of all chordless cycles (triples) we consider the following \IPSLP\ relaxation of the multicut problem~\footnote{One can show that this relaxation coincides with the standard LP relaxation for the multicut problem~\cite{ChopraMulticut}}:
\begin{align}\label{eq:multicut-objective}
 & \min_{\mu,\tilde\mu}\sum_{e\in\FE}\theta_{e}\mu_{e} + \sum_{c\in \FC}\sum_{e\in c}\tilde\theta_{e,c}\tilde\mu_{e,c}\,,\quad \text{s.t.}\\
 &{\small
 \left\{
 \begin{array}{l}
  \mu_{e}\in \conv\{\{0,1\}\}= [0,1],\ e\in\FE\\
  \forall c\in \FC,\ e\in c\colon \\
  \tilde\mu_{c}:=(\tilde\mu_{e,c})_{e\in c}\in\conv\{\{0,1\}^3|\ \forall e'\in\ c\colon\hspace{-5pt} \sum\limits_{e\in c\backslash\{e'\}}\hspace{-9pt}\tilde\mu_{e,c} \ge \tilde\mu_{e',c}\}\\
  \hspace{10pt}\equiv\conv\{\{0,0,0\},\{0,1,1\},\{1,0,1\},\{1,1,0\},\{1,1,1\}\}\\
  \mu_e = \tilde\mu_{e,c}
 \end{array}
\right.
}\label{eq:multicut-constraints}
\end{align}
For the sake of notation we shortened a feasible set definition $\tilde\mu\in\conv\{\mu'\in\{0,1\}^n\colon \text{constraints on}\ \mu'\}$ to $\tilde\mu\in\conv\{\{0,1\}^n\colon \text{constraints on}\ \tilde\mu\}$.
Here $\mu_e$ is the relaxed (potentially non-integer) variable corresponding to $x_e$. Variable $\tilde\mu_{e,c}$ is a copy of $\mu_e$, which corresponds to the cycle~$c$. 
Therefore, each $\mu_e$ gets as many copies $\tilde \mu_{e,c}$, as many chordless cycles $c$ contain the edge $e$.
For each cycle the set of binary vectors satisfying the cycle inequality is considered. For a cycle with $3$ edges this set can be written explicitly as in~\eqref{eq:multicut-constraints}.
Along with copies of $\mu_e$, $e\in\FE$ we copy the corresponding cost $\theta_e$ and create auxiliary costs $\tilde\theta_{e,c}\equiv 0$ for each cycle $c$ containing the edge~$e$.
During optimization, the cost $\theta_e$ will be redistributed between  $\theta_e$ itself and its copies $\tilde\theta_{e,c}$, $c\in \FC$.
The factors of the \IPSLP\ are associated with each edge (variable $\mu_e$) and each chordless cycle (variable~$\tilde\mu_{c}$). Coupling constraints connect edge-factors with those cycle-factors, which contain the corresponding edge (see the last constraint in~\eqref{eq:multicut-constraints}).
An in-depth discussion of message passing for the multicut problem with tighter relaxations can be found in~\cite{MulticutMessagePassing}.
\end{example}

\definecolor{mycolor1}{HTML}{009900}
    \definecolor{mycolor2}{HTML}{990000}
    \definecolor{green}{HTML}{009900}
    \definecolor{reddy}{HTML}{990000}
    \usetikzlibrary{decorations.pathmorphing}
	\tikzstyle{cut-edge}=[densely dotted,thick,reddy]
    \tikzstyle{vertex}=[circle, draw, fill=white, inner sep=0pt, minimum width=1ex]
    \tikzset{every picture/.append style={baseline,scale=1.1}}

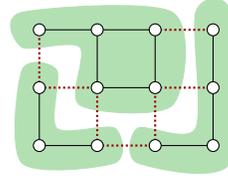
\begin{SCfigure}
   \centering
        \begin{tikzpicture}[scale=0.7]
    \draw[draw=mycolor1!30, fill=mycolor1!30] plot[smooth cycle, tension=0.5] coordinates
        {(-0.3, 2.3) (2.3, 2.3) (2.3, 0.7) (0.7, 0.7) (0.7, 1.7) (-0.3, 1.7)};
    \draw[draw=mycolor1!30, fill=mycolor1!30] plot[smooth cycle, tension=0.5] coordinates
        {(-0.3, -0.3) (-0.3, 1.3) (0.3, 1.3) (0.3, 0.3) (1.3, 0.3) (1.3, -0.3)};
    \draw[draw=mycolor1!30, fill=mycolor1!30] plot[smooth cycle, tension=0.5] coordinates
        {(1.7, -0.3) (1.7, 0.3) (2.7, 0.3) (2.7, 2.3) (3.3, 2.3) (3.3, -0.3)};
	\draw (0, 0) -- (0, 1);
	\draw (0, 0) -- (1, 0);
	\draw (0, 2) -- (1, 2);
	\draw (1, 1) -- (1, 2);
	\draw (1, 1) -- (2, 1);
	\draw (1, 2) -- (2, 2);
	\draw (2, 1) -- (2, 2);
	\draw (2, 0) -- (3, 0);
	\draw (3, 0) -- (3, 1);
	\draw (3, 1) -- (3, 2);
	\draw[style=cut-edge] (0, 1) -- (0, 2);
	\draw[style=cut-edge] (0, 1) -- (1, 1);
	\draw[style=cut-edge] (1, 0) -- (1, 1);
	\draw[style=cut-edge] (1, 0) -- (2, 0);
	\draw[style=cut-edge] (2, 0) -- (2, 1);
	\draw[style=cut-edge] (2, 1) -- (3, 1);
	\draw[style=cut-edge] (2, 2) -- (3, 2);
	\node[style=vertex] at (0, 0) {};
	\node[style=vertex] at (1, 0) {};
	\node[style=vertex] at (0, 1) {};
	\node[style=vertex] at (0, 2) {};
	\node[style=vertex] at (1, 1) {};
	\node[style=vertex] at (1, 2) {};
	\node[style=vertex] at (2, 1) {};
	\node[style=vertex] at (2, 2) {};
	\node[style=vertex] at (2, 0) {};
	\node[style=vertex] at (3, 0) {};
	\node[style=vertex] at (3, 1) {};
	\node[style=vertex] at (3, 2) {};
   \end{tikzpicture}
        \caption{\protect\rule{0ex}{-50ex}Illustration of Example~\ref{example:multicut}.
A multicut of a graph induced by three connected components $\Pi_1,\Pi_2,\Pi_3$ (\textcolor{mycolor1}{green}). \textcolor{reddy}{Red dotted} edges indicate cut edges $x_e = 1$.
}
   \label{fig:Multicut}
\end{SCfigure}

\section{Dual Problem and Admissible Messages}
\label{sec:DualProblem}


Since our technique can be seen as a dual ascent, we will not optimize the primal problem~\eqref{eq:FactorGraphPrimal} directly, but instead maximize its dual lower bound.

\myparagraph{Dual IPS-LP}
The Lagrangean dual to~\eqref{eq:FactorGraphPrimal} w.r.t. the coupling constraints reads
\begin{equation}
   \label{eq:DualLowerBound}
   \begin{array}{ll}
      \max_{\phi} & D(\phi) := \sum_{i \in \BF} \min_{x_i \in \SX_i} \la\theta^{\phi}_i, x_i \ra \\
      \text{s.t.} & \theta^{\phi}_i := \theta_i + \sum_{j : ij \in \BE} A_{(i,j)}^\T \phi_{(i,j)} \quad \forall i \in \BF\\
               & \phi_{(i,j)} = -\phi_{(j,i)} \quad \forall ij \in \BE
\end{array}
\end{equation}
Here $\phi_{(i,j)}\in\BR^{K}$ for $A_{(i,j)}\in\{0,1\}^{K\times d_i}$ for some $K \in \N$. The function $D(\phi)$ is called \emph{lower bound} and is concave in $\phi$.
The modified primal costs $\theta^{\phi}$ are called \emph{reparametrizations} of the potentials~$\theta$. We have duplicated the dual variables by introducing $\phi_{(i,j)}: = - \phi_{(j,i)}$ to symmetrize notation. In practice, only one copy is stored and the other is computed on the fly. 
Note that in this doubled notation the reparametrized node and edge potentials of the CRF from Example~\ref{example:LocalPolytope} read
\begin{equation*}
\begin{array}{l}
 \theta_u^{\phi}(x_u) = \theta_u(x_u) +\sum_{v\colon uv\in\FE}\phi_{u,uv}(x_u)\\
 \theta_{uv}^{\phi}(x_u,x_v) = \theta_{uv}(x_u,x_v) + \phi_{uv,v}(x_v) + \phi_{uv,u}(x_u)\\
  \phi_{u,uv} = - \phi_{uv,u}
\end{array}
\end{equation*}

It is well-known for CRFs that cost of feasible solutions are invariant under reparametrization.
We generalize this to the \IPSLP-case.
\begin{proposition}\label{prop:repa-invariant}
$\sum_{i\in\BF} \la \theta_i,\mu_i\ra = \sum_{i\in\BF} \la \theta^{\phi}_i,\mu_i\ra$, whenever $\mu_1,\ldots,\mu_k$ obey the coupling constraints.
\end{proposition}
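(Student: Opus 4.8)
The plan is to show that the two objectives differ by zero, exploiting three ingredients in sequence: the adjoint relation between $A_{(i,j)}$ and $A_{(i,j)}^\T$, the antisymmetry $\phi_{(i,j)} = -\phi_{(j,i)}$, and finally the coupling constraints themselves.

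First I would subtract the two sides and substitute the definition of the reparametrization $\theta^{\phi}_i = \theta_i + \sum_{j:ij\in\BE} A_{(i,j)}^\T \phi_{(i,j)}$, so that the original potentials cancel and only the correction terms survive:
\[
\sum_{i\in\BF} \la \theta^{\phi}_i - \theta_i, \mu_i\ra
= \sum_{i\in\BF}\sum_{j:ij\in\BE} \la A_{(i,j)}^\T \phi_{(i,j)}, \mu_i\ra .
\]
Next I would move $A_{(i,j)}^\T$ across the inner product using the standard adjoint identity $\la A^\T \phi, \mu\ra = \la \phi, A\mu\ra$, turning each summand into $\la \phi_{(i,j)}, A_{(i,j)}\mu_i\ra$. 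Since the double sum ranges over all ordered pairs $(i,j)$ with $ij\in\BE$, I would regroup it into a sum over undirected edges, each contributing its two orientations $(i,j)$ and $(j,i)$.

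The key step is then to pair these two orientations and invoke the antisymmetry $\phi_{(j,i)} = -\phi_{(i,j)}$, which collapses each edge's contribution to
\[
\la \phi_{(i,j)}, A_{(i,j)}\mu_i\ra + \la \phi_{(j,i)}, A_{(j,i)}\mu_j\ra
= \la \phi_{(i,j)}, A_{(i,j)}\mu_i - A_{(j,i)}\mu_j\ra .
\]
At this point the coupling constraint $A_{(i,j)}\mu_i = A_{(j,i)}\mu_j$, valid precisely because $\mu$ is assumed to obey the coupling constraints, makes the bracket vanish, so every edge term is zero and the two objectives coincide.

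The main obstacle is bookkeeping rather than conceptual: one must regroup the double sum over factors and their neighbors into a sum over undirected edges so that each orientation of the doubled variables $\phi_{(i,j)}, \phi_{(j,i)}$ is counted exactly once, which is what lets the antisymmetry be applied cleanly. I note that neither integrality of $\mu$ nor the assumption $A_{(i,j)}x_i\in\{0,1\}^K$ from Definition~\ref{def:IPSLP} is used; the identity holds for any $\mu$ satisfying the coupling constraints, which is why it specializes immediately to the CRF reparametrization invariance recalled above.
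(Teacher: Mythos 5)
Your proposal is correct and follows essentially the same route as the paper's own proof: expand $\theta^{\phi}_i$, pass $A_{(i,j)}^\T$ through the inner product, regroup by undirected edges, and cancel each edge term using $\phi_{(i,j)} = -\phi_{(j,i)}$ together with the coupling constraint $A_{(i,j)}\mu_i = A_{(j,i)}\mu_j$. The paper compresses this into a one-line display, but the argument is identical; your added remark that integrality plays no role is also consistent with the paper, which states the proposition for all $\mu$ obeying the coupling constraints.
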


\myparagraph{Admissible Messages}
While Proposition~\ref{prop:repa-invariant} guarantees that the primal problem is invariant under reparametrizations, the dual lower bound $D(\phi)$ is not. Our goal is to find $\phi$ such that $D(\phi)$ is maximal. By linear programming duality, $D(\phi)$ will then be equal to the optimal value of the primal~\eqref{eq:FactorGraphPrimal}.

First we will consider an elementary step of our future algorithm  and show that it is non-decreasing in the dual objective. This property will ensure the monotonicity of the whole algorithm. Let~$\theta^{\phi}$ be any reparametrization of the problem and $D(\phi)$ be the corresponding dual value. Let us consider changing the reparametrization of a factor $i$ by a vector $\Delta$ with the only non-zero components~$\Delta_{(i,j)}$ and~$\Delta_{(j,i)}$ . This will change reparametrization of the coupled factors $j$ (such that $ij\in\BE$) due to $\Delta_{(i,j)}=-\Delta_{(j,i)}$. The lemma below states properties of $\Delta_{(i,j)}$ which are sufficient to guarantee improvement of the corresponding dual value $D(\phi+\Delta)$:
\begin{lemma}[Monotonicity Condition]\label{prop:dual-monotonicity}
Let $ij\in\BE$ be a pair of factors related  by the coupling constraints and $\phi_{(i,j)}$ be a corresponding dual vector. 
Let $x_i^* \in \argmin\limits_{x_i\in\SX_i} \la \theta^{\phi}_i, x_i \ra$ and $\Delta_{(i,j)}$ satisfy
\begin{equation}\label{equ:allowed-phi}
\Delta_{(i,j)}(s) 
\begin{cases}
\ge 0, & \nu(s) =1\\
\le 0, & \nu(s) =0
\end{cases},\ \text{where}\ \nu:=A_{(i,j)} x^*_i\,.
\end{equation}
Then $x_i^* \in \argmin\limits_{x_i\in\SX_i} \la \theta^{\phi+\Delta}_i, x_i \ra$
implies ${D(\phi) \le D(\phi+\Delta)}$.
\end{lemma}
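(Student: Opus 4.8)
The plan is to compute the difference $D(\phi+\Delta) - D(\phi)$ directly and show it is nonnegative. Since $\Delta$ has only the two nonzero blocks $\Delta_{(i,j)} = -\Delta_{(j,i)}$, the reparametrization $\theta^\phi_\ell$ is left unchanged for every factor $\ell \notin \{i,j\}$, so in the sum defining $D$ every term cancels except those indexed by $i$ and $j$. Concretely, $\theta^{\phi+\Delta}_i = \theta^\phi_i + A_{(i,j)}^\T \Delta_{(i,j)}$ and $\theta^{\phi+\Delta}_j = \theta^\phi_j - A_{(j,i)}^\T \Delta_{(i,j)}$, and it remains to control the change of the two corresponding minima.

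For factor $i$ I would invoke the two optimality hypotheses. Because $x_i^*$ is assumed to minimize both $\la\theta^\phi_i,\cdot\ra$ and $\la\theta^{\phi+\Delta}_i,\cdot\ra$, the factor-$i$ minimum changes by exactly $\la\theta^{\phi+\Delta}_i,x_i^*\ra - \la\theta^\phi_i,x_i^*\ra = \la A_{(i,j)}^\T\Delta_{(i,j)}, x_i^*\ra = \la\Delta_{(i,j)}, \nu\ra$, using $\nu := A_{(i,j)}x_i^*$. For factor $j$ I have no control over where the minimizer moves, so I settle for a one-sided estimate: writing $\tilde y_j$ for a minimizer of $\la\theta^{\phi+\Delta}_j,\cdot\ra$ over $\SX_j$ and using $\la\theta^\phi_j,\tilde y_j\ra \ge \min_{x_j\in\SX_j}\la\theta^\phi_j, x_j\ra$, the factor-$j$ minimum drops by at most $\la\Delta_{(i,j)}, A_{(j,i)}\tilde y_j\ra$. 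Adding the two contributions yields $D(\phi+\Delta) - D(\phi) \ge \la\Delta_{(i,j)}, \nu\ra - \la\Delta_{(i,j)}, A_{(j,i)}\tilde y_j\ra$.

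The final step, which I regard as the crux, is to show this last difference is nonnegative. Here the integrality built into the \IPSLP\ definition is essential: $A_{(j,i)}\tilde y_j \in \{0,1\}^K$ is a vertex of the hypercube, and the sign conditions \eqref{equ:allowed-phi} say precisely that $\nu$ takes the value $1$ on every coordinate where $\Delta_{(i,j)}$ is positive and $0$ on every coordinate where it is negative. Consequently $\nu$ maximizes $w \mapsto \la\Delta_{(i,j)}, w\ra$ over all $w\in\{0,1\}^K$, so $\la\Delta_{(i,j)},\nu\ra \ge \la\Delta_{(i,j)}, A_{(j,i)}\tilde y_j\ra$ and the bound closes. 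The main obstacle, I expect, is realizing that no knowledge of the factor-$j$ minimizer is needed at all: integrality alone forces $\nu$ to dominate every feasible image $A_{(j,i)}y_j$, which is exactly what converts the crude one-sided estimate for factor $j$ into the desired monotonicity.
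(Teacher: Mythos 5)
Your proposal is correct and follows essentially the same route as the paper's proof: both use the hypothesis that $x_i^*$ stays optimal to pin down the factor-$i$ change as exactly $\la\Delta_{(i,j)},\nu\ra$, bound the factor-$j$ change from below by evaluating its post-update minimizer against the pre-update cost, and close with the same key observation that integrality ($A_{(j,i)}x_j\in\{0,1\}^K$, Definition~\ref{def:IPSLP}) together with the sign conditions~\eqref{equ:allowed-phi} makes $\nu$ dominate every feasible image $A_{(j,i)}x_j$ under the linear functional $\la\Delta_{(i,j)},\cdot\ra$. The only difference is presentational: you compute $D(\phi+\Delta)-D(\phi)$ directly, whereas the paper writes the target inequality and rearranges terms, but the three ingredients are identical.
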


\begin{example}\label{example:addmissinle-dual} Let us apply Lemma~\ref{prop:dual-monotonicity} to Example~\ref{example:LocalPolytope}.
 Let $ij$ correspond to $\{u,uv\}$, where $u\in\FV$ is some node and $uv\in\FE$ is any of its incident edges.
 Then $x_i^*$ corresponds to a locally optimal label $x^*_u\in\arg\min_{s\in X_u}\theta_u(s)$ and $\nu(s)=\llbracket s= x^*_u\rrbracket$. Therefore we may assign $\Delta_{u,uv}(s)$ to any value from $[0,\theta_u(x^*_u) - \theta_u(s)]$. This assures that \eqref{equ:allowed-phi} is fulfilled and $x^*_u$ remains a locally optimal label after reparametrization even if there are multiple optima in $X_u$.
\end{example}

 Lemma~\ref{prop:dual-monotonicity} can be straightforwardly generalized to the case, when more than two factors must be reparametrized simultaneously. In terms of Example~\ref{example:LocalPolytope} this may correspond to the situation when a graph node sends messages to several incident edges at once: 
 
 \begin{definition}
Let ${i\in\BF}$ be a factor and ${J = \{j_1,\ldots,j_l\} \subseteq \SN_{\BG}(i)}$ be a subset of its neighbors. 
Let $\theta_i^{\Delta}:=\theta_i+\sum_{j \in J} A^\T_{(i,j)} \Delta_{(i,j)}$, $\Delta_{(i,j)}(=-\Delta_{(j,i)})$ satisfies~\eqref{equ:allowed-phi} for all $j\in J$ and all other coordinates of $\Delta$ are zero. 
If there exists $x_i^* \in \argmin\nolimits_{x_i\in\SX_i} \la \theta_i, x_i \ra$ such that ${x_i^* \in \argmin\nolimits_{x_i\in\SX_i} \la \theta^{\Delta}_i, x_i \ra}$, the dual vector $\Delta$ is called {\em admissible}. The set of admissible vectors is denoted by $AD(\theta_i,x^*_i,J)$.
\end{definition}

\begin{lemma}\label{prop:dual-monotonicity-extended}
 \hspace{-1pt}Let $\Delta\in AD(\theta^{\phi}_i,x^*_i,J)$ then $D(\phi)\le D(\phi+\Delta)$. 
\end{lemma}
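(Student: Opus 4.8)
The plan is to bound the increment $D(\phi+\Delta)-D(\phi)$ by accounting for it one factor at a time. Since $\Delta$ has nonzero entries only in the blocks indexed by $(i,j)$ and $(j,i)=-(i,j)$ for $j\in J$, the reparametrization $\theta^{\phi}_m$ of every factor $m\notin\{i\}\cup J$ is unchanged, so its term in the lower bound drops out of the difference. Writing $\delta_i$ and $\delta_j$ for the changes in the contributions of factor $i$ and of $j\in J$ respectively, it therefore suffices to show
\[
D(\phi+\Delta)-D(\phi)=\delta_i+\sum_{j\in J}\delta_j\ \ge\ 0.
\]

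\emph{Factor $i$ (exact value).} Because only the blocks $(i,j)$, $j\in J$, of $\phi$ are perturbed, $\theta^{\phi+\Delta}_i=\theta^{\phi}_i+\sum_{j\in J}A^{\T}_{(i,j)}\Delta_{(i,j)}=\theta^{\Delta}_i$ (with base cost $\theta^{\phi}_i$). By \textbf{admissibility}, $\Delta\in AD(\theta^{\phi}_i,x_i^*,J)$ means the same $x_i^*$ minimizes both $\la\theta^{\phi}_i,\cdot\ra$ and $\la\theta^{\phi+\Delta}_i,\cdot\ra$. Hence $\delta_i$ is not merely bounded but computed exactly:
\[
\delta_i=\la\theta^{\phi+\Delta}_i-\theta^{\phi}_i,\,x_i^*\ra=\sum_{j\in J}\la\Delta_{(i,j)},A_{(i,j)}x_i^*\ra=\sum_{j\in J}\la\Delta_{(i,j)},\nu^{ij}\ra,
\]
where $\nu^{ij}:=A_{(i,j)}x_i^*$ is exactly the vector $\nu$ of the sign condition~\eqref{equ:allowed-phi}.

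\emph{Factor $j$ (lower bound).} For $j\in J$ only the $(j,i)$ block is perturbed, so $\theta^{\phi+\Delta}_j=\theta^{\phi}_j-A^{\T}_{(j,i)}\Delta_{(i,j)}$ and, for every $x_j\in\SX_j$,
\[
\la\theta^{\phi+\Delta}_j,x_j\ra=\la\theta^{\phi}_j,x_j\ra-\la\Delta_{(i,j)},A_{(j,i)}x_j\ra\ \ge\ \la\theta^{\phi}_j,x_j\ra-\la\Delta_{(i,j)},\nu^{ij}\ra.
\]
This inequality is the heart of the argument and is proved coordinatewise: by the \IPSLP\ assumption $A_{(j,i)}x_j\in\{0,1\}^{K}$, so $0\le(A_{(j,i)}x_j)(s)\le 1$, while~\eqref{equ:allowed-phi} gives $\Delta_{(i,j)}(s)\ge 0$ when $\nu^{ij}(s)=1$ and $\Delta_{(i,j)}(s)\le 0$ when $\nu^{ij}(s)=0$; multiplying these out coordinate by coordinate yields $\Delta_{(i,j)}(s)(A_{(j,i)}x_j)(s)\le\Delta_{(i,j)}(s)\,\nu^{ij}(s)$. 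Minimizing over $x_j$ then gives $\delta_j\ge-\la\Delta_{(i,j)},\nu^{ij}\ra$.

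Summing the exact value of $\delta_i$ with the lower bounds on the $\delta_j$, the two families $\pm\la\Delta_{(i,j)},\nu^{ij}\ra$ cancel and we obtain $D(\phi+\Delta)-D(\phi)\ge 0$. The step I expect to be the main obstacle is the alignment of the vector $\nu^{ij}=A_{(i,j)}x_i^*$ produced by factor $i$ with the very same $\nu^{ij}$ controlling the bound for factor $j$; this matching is precisely what makes the cancellation work and is the reason~\eqref{equ:allowed-phi} is phrased in terms of $\nu=A_{(i,j)}x_i^*$. Let me also note why one should \emph{not} simply iterate the pairwise Lemma~\ref{prop:dual-monotonicity} edge by edge: after reparametrizing toward $j_1$ the cost $\theta^{\phi}_i$ already changes, whereas admissibility only guarantees that $x_i^*$ stays optimal for the fully (simultaneously) reparametrized $\theta^{\Delta}_i$, not for the intermediate ones. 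The factor-wise accounting above avoids this, needing $x_i^*$ to be a common minimizer of factor $i$ only before and after the joint update.
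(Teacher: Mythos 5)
Your proof is correct and essentially the same as the paper's: the paper's proof (given for Lemma~\ref{prop:dual-monotonicity} already in the general-$J$ form, with Lemma~\ref{prop:dual-monotonicity-extended} declared analogous) likewise uses admissibility to keep $x_i^*$ optimal after the update, and then bounds the receiving factors' change via the coordinatewise sign argument combined with $A_{(j,i)}x_j\in\{0,1\}^K$, which is exactly your cancellation of the $\pm\la\Delta_{(i,j)},\nu^{ij}\ra$ terms. The only difference is bookkeeping: you organize it as per-factor increments $\delta_i,\delta_j$, while the paper writes one global inequality and shuffles the $\Delta$-terms to one side.
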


\begin{algorithm}
\label{alg:MessagePassingUpdate}
\SetAlgorithmName{Procedure}{}\\
\textbf{Input:}Factor $i\in\BF$, neighboring factors $J = \{j_1,\ldots,j_l\} \subseteq \SN_{\BG}(i)$, dual variables $\phi$
\begin{equation}\label{eq:FactorOptimization}
\hspace{-70pt}\text{Compute}\ x^*_i\in\arg\min\nolimits_{x_i\in X_i}\la \theta^{\phi},x_i\ra \hfill
\end{equation}
\begin{equation}\label{equ:delta-def}
\hspace{-13pt} \text{Choose}\ \delta \in \R^{d_i}\ \text{s.t.}\ \delta(s) \left\{ \begin{array}{ll} > 0, & x^*_i(s) = 1 \\ < 0,& x^*_i(s) = 0 \end{array} \right. 
\end{equation}

Maximize admissible messages to $J$:
\begin{equation}\label{eq:ReverseFactorOptimization}
\Delta^*_{(i,J)} \in 
\argmax\limits_{\Delta\in AD(\theta^{\phi}_i,x^*_i,J)} \la \delta,\theta_i^{\phi+\Delta} \ra 
\end{equation}\\
\textbf{Output:} $\Delta^*_{(i,J)}$.
\caption{Message-Passing Update Step. }
\end{algorithm}

\myparagraph{Message-Passing Update Step}
To maximize $D(\phi)$, we will iteratively visit all factors and adjust messages $\phi$ connected to it, monotonically increasing the lower bound~\eqref{eq:DualLowerBound}. Such an elementary step is defined by Procedure~\ref{alg:MessagePassingUpdate}.

Procedure~\ref{alg:MessagePassingUpdate} is defined up to the vector $\delta$, which satisfies~\eqref{equ:delta-def} (see Proc.~\ref{alg:MessagePassingUpdate}). 
Usually, $\delta(s)=\left\{ \begin{array}{rl} 1, & x^*_i(s) = 1 \\ -1 ,& x^*_i(s) = 0 \end{array}\right.$ is a good choice.
Although different $\delta$ may result in different efficiency of our framework, fulfillment of~\eqref{equ:delta-def} is sufficient to prove its convergence properties.

The reparametrization adjustment problem~\eqref{eq:ReverseFactorOptimization} serves an intuitive goal to move as much slack as possible from the factor $i$ to its neighbors $J$. For example, for the setting of Example~\ref{example:addmissinle-dual} its solution reads $\Delta_{u,uv}(s) = \theta^{\phi}_u(x^*_u) - \theta^{\phi}_u(s)$.
Depending on the selected $\delta$ it might correspond to maximization of the dual objective in the direction defined by admissible reparametrizations.
Although maximization~\eqref{eq:ReverseFactorOptimization} is not necessary to prove convergence of our method (as we show below, only a feasible solution of~\eqref{eq:ReverseFactorOptimization} is required for the proof),
(i) it leads to faster convergence; (ii) for the case of CRFs (as in Example~\ref{example:LocalPolytope}) it makes our method equivalent to well established techniques like TRW-S~\cite{TRWSKolmogorov} and SRMP~\cite{SRMPKolmogorov}, as shown in Section~\ref{sec:parameter-selection}.

The following proposition states that the elementary update step defined by Procedure~\ref{alg:MessagePassingUpdate} can be performed efficiently. That is, the size of the reparametrization adjustment problem~\eqref{eq:ReverseFactorOptimization} grows linearly with the size of the factor $i$ and its attached messages:
\begin{proposition}\label{prop:repa-adjustment-is-LP}
Let $\conv(\SX_i) = \{ \mu_i : A_i \mu_i \leq b_i\}$ with $A \in \{0,1\}^{n \times m}$.
Let the messages in problem~\eqref{eq:ReverseFactorOptimization} have size $n_1,\ldots,n_{\abs{J}}$.
        Then~\eqref{eq:ReverseFactorOptimization} is a linear program with $O(n + n_1 + \ldots + n_{\abs{J}})$ variables and $O(m + n_1 + \ldots + n_{\abs{J}})$ constraints.
\end{proposition}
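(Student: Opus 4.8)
The plan is to exhibit \eqref{eq:ReverseFactorOptimization} explicitly as a linear program by checking that its objective is linear in the unknown $\Delta=(\Delta_{(i,j)})_{j\in J}$ and that the feasible set $AD(\theta^{\phi}_i,x^*_i,J)$ can be described by linearly many linear (in)equalities after introducing a controlled number of auxiliary variables. First, expanding $\theta_i^{\phi+\Delta}=\theta^{\phi}_i+\sum_{j\in J}A_{(i,j)}^\T\Delta_{(i,j)}$ gives
\[
\la \delta,\theta_i^{\phi+\Delta}\ra=\la\delta,\theta^{\phi}_i\ra+\sum_{j\in J}\la A_{(i,j)}\delta,\Delta_{(i,j)}\ra,
\]
which is affine in $\Delta$, the constant $\la\delta,\theta^{\phi}_i\ra$ being irrelevant for the $\argmax$. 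The unknown $\Delta$ has $n_1+\dots+n_{\abs{J}}$ coordinates.

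Next I would handle the two parts of admissibility. The sign conditions \eqref{equ:allowed-phi}, imposed for each $j\in J$, amount to $\Delta_{(i,j)}(s)\ge 0$ or $\le 0$ according to the \emph{fixed} pattern $\nu=A_{(i,j)}x^*_i$ (recall $x^*_i$ is computed beforehand); these are $n_1+\dots+n_{\abs{J}}$ linear sign constraints. The remaining requirement, $x^*_i\in\argmin_{x_i\in\SX_i}\la\theta^{\Delta}_i,x_i\ra$, is the crux: written naively as $\la\theta^{\Delta}_i,x^*_i\ra\le\la\theta^{\Delta}_i,x_i\ra$ for all $x_i\in\SX_i$ it has exponentially many inequalities, so the whole point is to replace it by a compact certificate. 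Since a linear functional attains the same minimum over $\SX_i$ and over $\conv(\SX_i)=\{\mu:A_i\mu\le b_i\}$, LP duality yields
\[
\min_{A_i\mu\le b_i}\la\theta^{\Delta}_i,\mu\ra=\max_{y\ge 0,\ A_i^\T y=-\theta^{\Delta}_i}-\la b_i,y\ra,
\]
and, as $x^*_i$ is feasible, it is a minimizer if and only if some dual-feasible $y$ closes the gap, i.e.
\[
\exists\, y\in\R^{n}:\quad y\ge 0,\quad A_i^\T y=-\theta^{\Delta}_i,\quad \la\theta^{\Delta}_i,x^*_i\ra+\la b_i,y\ra\le 0.
\]
Because $\theta^{\Delta}_i$ is linear in $\Delta$, all three conditions are linear in the joint variable $(\Delta,y)$; this adds $n$ variables $y$, together with $m$ equalities, one scalar gap inequality, and the bounds $y\ge 0$.

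Assembling these pieces, \eqref{eq:ReverseFactorOptimization} becomes the maximization of the affine objective above over $(\Delta,y)$ subject to the linear constraints just listed, hence a linear program. Counting: the variables are the $n_1+\dots+n_{\abs{J}}$ entries of $\Delta$ and the $n$ entries of $y$, i.e. $O(n+n_1+\dots+n_{\abs{J}})$; the constraints are the $m$ dual-feasibility equalities, the single gap inequality, and the $n_1+\dots+n_{\abs{J}}$ sign constraints on $\Delta$ (nonnegativity of the auxiliary $y$ being treated as variable bounds), i.e. $O(m+n_1+\dots+n_{\abs{J}})$, as claimed.

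The main obstacle is precisely Step~3: encoding ``$x^*_i$ remains a minimizer'' without enumerating the exponentially many vertices of $\SX_i$. Dualizing the inner minimization resolves this, since optimality of the fixed point $x^*_i$ is then certified by a single dual-feasible vector of size $n$, keeping the entire formulation linear and of the stated size. One should also be careful to use the correct duality convention (the sign of $y$ and the direction of the gap inequality) so that feasibility of the certificate is genuinely equivalent to optimality of $x^*_i$; the trivial feasible point $\Delta=0$ confirms the program is well posed.
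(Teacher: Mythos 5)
Your proof is correct and takes essentially the same route as the paper's: both replace the exponentially many optimality inequalities certifying that $x_i^*$ remains a minimizer by an LP-duality certificate $y \in \R^n$, $y \geq 0$, that is linear in the joint variable $(\Delta, y)$, which exhibits~\eqref{eq:ReverseFactorOptimization} as a linear program with the stated $O(n + n_1 + \ldots + n_{\abs{J}})$ variables and $O(m + n_1 + \ldots + n_{\abs{J}})$ constraints. The only cosmetic difference is that the paper encodes optimality of $x_i^*$ via dual feasibility plus complementary slackness $\la b_i - A_i \mu_i^* , y \ra = 0$, whereas you close the duality gap with the inequality $\la \theta^{\Delta}_i, x^*_i \ra + \la b_i, y \ra \le 0$ (with the opposite sign convention $A_i^\T y = -\theta^{\Delta}_i$); these certificates are equivalent.
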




\section{Message Passing Algorithm}\label{sec:MessagePassingAlgorithm}

Now we combine message passing updates into Algorithm~\ref{alg:AbstractMessagePassing}.
It visits every node of the factor graph and performs the following two operations:
(i) {\bf Receive Messages}, when messages are received from a subset of neighboring factors,  
and (ii) {\bf Send Messages}, when messages to some neighboring factors are computed and reweighted by $\omega$.
Distribution of weights $\omega$ may influence the efficiency of Algorithm~\ref{alg:AbstractMessagePassing} just like it influences the efficiency of message passing for CRFs (see~\cite{SRMPKolmogorov}). We provide typical settings in Section~\ref{sec:parameter-selection}. 
Usually, factors are traversed in some given a-priori order alternately in forward and backward direction, as done in TRW-S~\cite{TRWSKolmogorov} and SRMP~\cite{SRMPKolmogorov}. We refer to~\cite{SRMPKolmogorov} for a motivation for such a schedule of computations.

\begin{algorithm}
\label{alg:AbstractMessagePassing}
\For{$i \in \BF$ in some order} {
\underline{\textbf{Receive Messages:}}\\
Choose a subset of connected factors $J_{receive} \subseteq \SN_{\BG}(i)$\\
\For{$j \in J_{receive}$} {
Compute $\Delta^*_{(j,\{i\})}$ with Procedure~\ref{alg:MessagePassingUpdate}. \\
Set \quad $\phi = \phi + \Delta^*_{(j,\{i\})}$. \\
}
\ \\
\underline{\textbf{Send Messages:}}\\
Choose partition $J_1 \dot{\cup} \ldots \dot{\cup} J_l \subseteq \SN_{\BG}(i)$. \label{linenum:alg2:partition}\\
\For{$J \in \{J_1,\ldots,J_l\}$} {
Compute $\Delta^*_{(i,J)}$ with Procedure~\ref{alg:MessagePassingUpdate}.
}
Choose weights $\omega_{J_1},\ldots,\omega_{J_l} \geq 0$ such that $ \omega_{J_1} + \ldots +\omega_{J_l} \leq 1$.\\
\For{$J \in \{J_1,\ldots,J_l\}$} {
   Set $\phi = \phi + \omega_{J} \Delta^*_{(i,J)}$. \\
}
}
\caption{One Iteration of Message-Passing}
\end{algorithm}
We will discuss parameters of Algorithm~\ref{alg:AbstractMessagePassing} (factor partitioning $\{J_i\}$, weights $w_{J_i}$) right after the theorem stating monotonicity for any choice of parameters.

\begin{theorem}\label{thm:Algorithm-is-Dual-Monotone}
   Algorithm~\ref{alg:AbstractMessagePassing} monotonically increases the dual lower bound~\eqref{eq:DualLowerBound}.
\end{theorem}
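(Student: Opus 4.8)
The plan is to reduce Theorem~\ref{thm:Algorithm-is-Dual-Monotone} to the already-established monotonicity lemmas by observing that Algorithm~\ref{alg:AbstractMessagePassing} is, at its core, nothing more than a finite sequence of elementary update steps, each of which is an instance of Procedure~\ref{alg:MessagePassingUpdate}. Since the dual objective $D(\phi)$ is a function of the global reparametrization, it suffices to show that \emph{every individual modification} of $\phi$ performed during one iteration does not decrease $D(\phi)$; monotonicity of the whole algorithm then follows by transitivity along the chain of updates.

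First I would dispatch the \textbf{Receive Messages} phase. For each $j \in J_{\mathit{receive}}$ the algorithm computes $\Delta^*_{(j,\{i\})}$ via Procedure~\ref{alg:MessagePassingUpdate} and applies the full update $\phi \leftarrow \phi + \Delta^*_{(j,\{i\})}$. By construction~\eqref{eq:ReverseFactorOptimization}, this $\Delta^*_{(j,\{i\})}$ lies in $AD(\theta^{\phi}_j, x^*_j, \{i\})$, so Lemma~\ref{prop:dual-monotonicity-extended} applies directly (with the single-neighbor set $J=\{i\}$) and gives $D(\phi) \le D(\phi + \Delta^*_{(j,\{i\})})$. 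Because these receive-updates are applied one at a time, each step is non-decreasing, and their composition is non-decreasing as well. The key point to verify here is that after each application the reparametrization $\theta^{\phi}$ is updated consistently, so that the optimality certificate $x^*_j$ used to produce the admissible message is computed against the \emph{current} $\phi$; this is exactly what Procedure~\ref{alg:MessagePassingUpdate} guarantees via~\eqref{eq:FactorOptimization}.

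For the \textbf{Send Messages} phase the argument is similar but requires handling the weighting. The algorithm first computes $\Delta^*_{(i,J)} \in AD(\theta^{\phi}_i, x^*_i, J)$ for each block $J$ of the partition $J_1 \dcup \ldots \dcup J_l$, then applies $\phi \leftarrow \phi + \omega_J \Delta^*_{(i,J)}$ with weights $\omega_J \ge 0$ summing to at most $1$. The crucial observation I would make is that the admissibility set $AD$ is a cone (or at least closed under scaling by $\omega \in [0,1]$): since the defining conditions~\eqref{equ:allowed-phi} are sign constraints $\Delta_{(i,j)}(s) \ge 0$ or $\le 0$, and the property ``$x_i^*$ remains a minimizer of $\la\theta^{\Delta}_i, x_i\ra$'' is preserved under convex combination with the zero message (because scaling down moves the reparametrized potential toward the original one at which $x_i^*$ is already optimal), each scaled message $\omega_J \Delta^*_{(i,J)}$ is itself admissible, i.e. $\omega_J \Delta^*_{(i,J)} \in AD(\theta^{\phi}_i, x^*_i, J)$. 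Moreover, since the blocks $J_1,\ldots,J_l$ are disjoint, the messages touch disjoint sets of neighbors, so applying them sequentially leaves $x_i^*$ optimal throughout and each partial update again invokes Lemma~\ref{prop:dual-monotonicity-extended}.

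\textbf{The hard part will be} justifying that the weight constraint $\sum_J \omega_J \le 1$ together with the disjointness of the partition genuinely guarantees that $x_i^*$ stays a common minimizer for $\theta^{\phi}_i$ after \emph{all} the scaled messages on the different blocks have been applied simultaneously to factor $i$ — not just for each block in isolation. Since all the $\Delta^*_{(i,J)}$ are computed against the same $x_i^*$ and the same current $\theta^{\phi}_i$, and since scaling each by $\omega_J$ with total weight $\le 1$ keeps the aggregate reparametrization of factor $i$ within the admissible (sign-consistent, convex) region around the original potential, $x_i^*$ indeed remains optimal, so the combined send step is a single admissible move covered by Lemma~\ref{prop:dual-monotonicity-extended} applied with $J = J_1 \cup \ldots \cup J_l$. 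Once this is in place, chaining the receive phase and the send phase across all factors $i \in \BF$ yields $D(\phi) \le D(\phi')$ for the full iteration, which is the claim.
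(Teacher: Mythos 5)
Your proof is correct, but it handles the weighted send step by a genuinely different mechanism than the paper. The paper's proof dispatches the receive phase exactly as you do (via Lemma~\ref{prop:dual-monotonicity}), but for the send phase it never argues that the combined update $\sum_m \omega_{J_m} \Delta^*_{(i,J_m)}$ is itself admissible; instead it writes $\phi + \sum_m \omega_{J_m} \Delta^*_{(i,J_m)}$ as the convex combination $(1-\sum_m \omega_{J_m})\,\phi + \sum_m \omega_{J_m}\,(\phi + \Delta^*_{(i,J_m)})$, invokes Lemma~\ref{prop:dual-monotonicity-extended} on each unweighted point $\phi + \Delta^*_{(i,J_m)}$, and concludes by concavity of $D$ and Jensen's inequality. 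That argument is shorter and needs neither the disjointness of $J_1,\ldots,J_l$ nor a common minimizer $x_i^*$ across the parallel invocations of Procedure~\ref{alg:MessagePassingUpdate}. Your route --- showing $AD(\theta^{\phi}_i,x^*_i,J)$ is closed under scaling by $\omega\in[0,1]$ (correct: the sign constraints \eqref{equ:allowed-phi} are preserved under nonnegative scaling, and a common minimizer of two linear objectives minimizes any convex combination of them) and that the disjoint scaled blocks aggregate into one admissible move for $J_1\cup\ldots\cup J_l$ --- is more structural: it yields the stronger statement that the entire send step is a \emph{single} admissible update, at the price of needing the blocks disjoint and all $\Delta^*_{(i,J_m)}$ certified against one common $x_i^*$. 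On the latter point your proof is slightly incomplete as written: since \eqref{eq:FactorOptimization} is solved separately for each block, ties in $\argmin_{x_i\in X_i}\la\theta^{\phi}_i,x_i\ra$ could in principle produce different certificates $x_i^*$; this is repaired either by fixing $x_i^*$ once per factor or by appealing to Lemma~\ref{lemma:UpdateSignsMultipleMinima} in the supplement, which transfers admissibility between minimizers, so it is a patchable omission rather than a fatal gap.
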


\subsection{Parameter Selection for Algorithm~\ref{alg:AbstractMessagePassing}}\label{sec:parameter-selection}


There are the following free parameters in Algorithm~\ref{alg:AbstractMessagePassing}: 
(i) The order of traversing factors of $\BF$; (ii) for each factor the neighboring factors from which to receive messages $J_{receive} \subseteq \SN_{\BG}(i)$;
(iii) the partition $J_1 \dot{\cup} \ldots \dot{\cup} J_l \subseteq \SN_{\BG}(i)$  of factors to send messages to
and (iv) the associated weights $\omega_{J_1},\ldots, \omega_{J_l}$ for messages.

Although for any choice of these parameters Algorithm~\ref{alg:AbstractMessagePassing} monotonically increases the dual lower bound (as stated by Theorem~\ref{thm:Algorithm-is-Dual-Monotone}), its efficiency may significantly depend on their values. Below, we will describe the parameters for Examples~\ref{example:LocalPolytope}-\ref{example:multicut}, which we found the most efficient empirically. Additionally, in the supplement we discuss parameters, which turn our algorithm into existing message passing solvers for CRFs (as in Example~\ref{example:LocalPolytope}).

Sending a message by some factor automatically implies receiving this message by another, coupled factor. Therefore, usually there is no need to go over all factors in Algorithm~\ref{alg:AbstractMessagePassing}. 
It is usually sufficient to guarantee that all coupling constraints are updated by Procedure~\ref{alg:MessagePassingUpdate}.
Formally, we can always exclude processing some factors by setting $J_{receive}$ and $J_i$, $i=1,\dots,l$ to the empty set. Instead, we will explicitly specify, which factors are processed in the loop of Algorithm~\ref{alg:AbstractMessagePassing} in the examples below.

\myparagraph{Parameters for Example~\ref{example:LocalPolytope}, MAP-inference in CRFs.}
Pairwise CRFs have the specific feature that node factors are coupled with edge factors only. This implies that processing only node factors in Algorithm~\ref{alg:AbstractMessagePassing} is sufficient. Below, we describe parameters, which turn Algorithm~\ref{alg:AbstractMessagePassing} into SRMP~\cite{SRMPKolmogorov} (which is up to details of implementation equivalent to TRW-S~\cite{TRWSKolmogorov} for pairwise CRFs). Other settings, given in the supplement, may turn it to other popular message passing techniques like MPLP~\cite{MPLP} or min-sum diffusion~\cite{schlesinger2011diffusion}.

We order node factors and process them according to this ordering. The ordering naturally defines the sets of incoming $\FE_u^{+}$ and outgoing $\FE_u^{-}$ edges for each node $u\in\FV$. Here $uv\in\FE$ is {\em incoming} for $u$ if $v < u$ and {\em outgoing} if $v > u$. Each node $u\in\FV$ receives messages from all incoming edges, which is $J_{receive} = \SN_{\BG}(u)=\FE_u^{+}$. The messages are send to all outgoing edges, each edge $uv\in\FE$ in the partition in line~\ref{linenum:alg2:partition} of Algorithm~\ref{alg:AbstractMessagePassing} is represented by a separate set. That is, the partition reads $\dot{\cup}_{e\in \FE_u^{-}}\{e\}$. Weights are distributed uniformly and equal to $w_e=\{\frac{1}{\max\{|\FE_u^{-}|,|\FE_u^{+}|\}}\}$, $e\in \FE_u^{-}$. 
After each outer iteration, when all nodes were processed, the ordering is reversed and the process repeats. We refer to~\cite{SRMPKolmogorov} for substantiation of these parameters.

\myparagraph{Parameters for Example~\ref{example:graph-matching}, Graph Matching.}
Additionally to the node and edge factors, the corresponding \IPSLP\ has also label factors~\eqref{eq:LabelFactorsIPSLP}. 
To this end all node factors are ordered, as in Example~\ref{example:LocalPolytope}. Each node factor $u\in\FV$ receives messages from all incoming edge factors and label factors $J_{receive}(u) = \FE_u^{+} \cup X_u$ and sends them to all outgoing edges {\em and} label factors. The corresponding partition reads $\dot{\cup}_{f\in \SN_{\BG}(u)\backslash \FE_u^{+}}\{f\}\ \dcup X_u$. The weights are distributed uniformly with $w_f=\{\frac{1}{1+\max\{|\FE_u^{-}|,|\FE_u^{+}|\}}\}$. 
The label factors are processed after all node factors were visited.
Each label factor receives messages from all connected node factors and send messages back as well: $J_{receive}(s) = \{u \in \FV: s \in X_u\}$. 
We use the same single set for sending messages, i.e. $J_1 = J_{receive}$.
After each iteration we reverse the factor order.

\myparagraph{Parameters for Example~\ref{example:multicut}, Multicut.}
Similarly to Example~\ref{example:LocalPolytope}, it is sufficient to go only over all edge factors in the loop of Algorithm~\ref{alg:AbstractMessagePassing}, since each coupling constraint contains exactly one cycle and one edge factor.
Each edge factor $e$ receives messages from all coupled cycle factors  $J_{receive}=\SN_{\BG}(\{c \in C : e \in c\})$ and sends them to the same factors.
As in Example~\ref{example:LocalPolytope}, each cycle factor forms a trivial set in the partition in line~\ref{linenum:alg2:partition}  of Algorithm~\ref{alg:AbstractMessagePassing}, the partition reads $\dot{\cup}_{c \in C : e \in c}\{c\}$. Weights are distributed uniformly with $w_e=\frac{1}{\abs{c \in C : e \in c}}$. 
After each iteration the processing order of factors is reversed.


\subsection{Obtaining Integer Solution}
Eventually we want to obtain a primal solution $x \in X$ of~\eqref{eq:FactorGraphPrimal}, not a reparametrization $\theta^{\phi}$.
We are not aware of any rounding technique which would work equally well for all possible instances of \IPSLP\ problem. 
According to our experience, the most efficient rounding is problem specific. Below, we describe our choices for the Examples \ref{example:LocalPolytope} -- \ref{example:multicut}.
\myparagraph{Rounding for Example~\ref{example:LocalPolytope}} coincides with the one suggested in~\cite{TRWSKolmogorov}:
Assume we have already computed a primal integer solution $x^*_v$ for all $v<u$ and we want to compute~$x^*_u$. 
To this end, right before the message receiving step of Algorithm~\ref{alg:AbstractMessagePassing} for $i=u$ we assign
\begin{equation}
         x_u^* \in \argmin_{x_u} \theta_u(x_u) + \sum_{v < u: uv \in \FE} \theta_{uv}(x_u, x_v^*)\,.
\end{equation} 

\myparagraph{Rounding for Example~\ref{example:graph-matching}} is the same except that we select the best label $x_u$ among those, which have not been assigned yet, to satisfy uniqueness constraints:
\begin{equation}
        x_u^* \in \argmin_{x_u : x_v^* \neq x_u \forall v < u} \theta^{\phi}_u(x_u) + \sum_{v < u: uv \in \FE} \theta^{\phi}_{uv}(x_u, x_v^*)\,.
\end{equation} 


\myparagraph{Rounding for Example~\ref{example:multicut}.}
We use the efficient Kernighan\&Lin heuristic~\cite{KernighanLin} as implemented in~\cite{ImageMeshDecompositionLiftedMulticut}. Costs for the rounding are the reparametrized edge potentials.


\section{Fixed Points and Comparison to Subgradient Method}
\label{sec:FixedPoints}
Algorithm~\ref{alg:AbstractMessagePassing} does not necessarily converge to the optimum of~\eqref{eq:FactorGraphPrimal}.
Instead, it may get stuck in suboptimal points, similar to those correspoding to the "weak tree agreement"~\cite{TRWSKolmogorov} or "arc consistency"~\cite{Werner07} in CRFs from Example~\ref{example:LocalPolytope}.
Below we characterise these fixpoints precisely.

\begin{definition}[Marginal Consistency]
Given a reparametrization $\theta^{\phi}$, let for each factor $i \in \BF$ a non-empty set $\BS_i \subseteq \argmin_{x_i \in \SX_i} \la \theta^{\phi}, x_i \ra$, $i\in\BF$ be given.
Define $\BS = \prod_{i \in \BF} \BS_i$.
We call reparametrization $\theta^{\phi}$ {\em marginally consistent for $\BS$ on $ij \in \BE$} if 
\begin{equation}
   A_{(i,j)} \left( \BS_i \right) = A_{(j,i)} \left(\BS_j\right)\,.
\end{equation}
If $\theta^{\phi}$ is marginally consistent for $\BS$ on all $ij \in \BE$, we call $\theta^{\phi}$ {\em marginally consistent for $\BS$}.
\end{definition}

Note that marginal consistency is necessary, but not sufficient for optimality of the relaxation~\eqref{eq:FactorGraphPrimal}. 
This can be seen in the case of CRFs (Example~\ref{example:LocalPolytope}), where it exactly corresponds to arc-consistency.
The latter is only necessary, but not sufficient for optimality~\cite{Werner07}.

\begin{theorem}
\label{thm:MarginalConsistencyFixPoint}
If $\theta^{\phi}$ is marginally consistent, the dual lower bound $D(\phi)$ cannot be improved by Algorithm~\ref{alg:AbstractMessagePassing}.
\end{theorem}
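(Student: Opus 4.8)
The plan is to reduce the statement to a single elementary update and then invoke monotonicity. Every modification of $\phi$ performed inside Algorithm~\ref{alg:AbstractMessagePassing} has the form $\omega\,\Delta^*_{(i,J)}$ for an admissible message produced by Procedure~\ref{alg:MessagePassingUpdate} and a weight $\omega\in[0,1]$ (the receive step is the special case in which a neighbour plays the role of the sender, $J=\{i\}$, and $\omega=1$). Since the sign pattern in~\eqref{equ:allowed-phi} is preserved under multiplication by $\omega\ge 0$, and since for every $y\in\SX_i$ linearity gives $\la\theta_i^{\phi+\omega\Delta},x_i^*\ra=(1-\omega)\la\theta_i^\phi,x_i^*\ra+\omega\la\theta_i^{\phi+\Delta},x_i^*\ra\le\la\theta_i^{\phi+\omega\Delta},y\ra$, the minimizer $x_i^*$ remains optimal and the scaled vector $\omega\Delta^*_{(i,J)}$ is again admissible. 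Hence it suffices to show that a single admissible update applied at a marginally consistent reparametrization leaves $D$ unchanged; combined with $D(\phi)\le D(\phi+\Delta)$ from Lemma~\ref{prop:dual-monotonicity-extended} this forces $D(\phi)=D(\phi+\Delta)$.

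For the elementary step I would compute the change of $D$ factor by factor, since only the potentials of $i$ and of $j\in J$ are modified. Fix a witness $\BS$ for marginal consistency, taking $\BS_i=\argmin_{x_i\in\SX_i}\la\theta_i^\phi,x_i\ra$, and set $\nu_{ij}:=A_{(i,j)}x_i^*$. Admissibility gives $x_i^*\in\argmin\la\theta_i^{\phi+\Delta},\cdot\ra$, so the contribution of factor $i$ equals $\la\theta_i^{\phi+\Delta}-\theta_i^\phi,x_i^*\ra=\sum_{j\in J}\la\Delta_{(i,j)},\nu_{ij}\ra$. For a neighbour $j$ the potential changes by $-A_{(j,i)}^\T\Delta_{(i,j)}$; marginal consistency on $ij$ supplies $x_j^*\in\BS_j$ with $A_{(j,i)}x_j^*=\nu_{ij}$, and evaluating the new factor-$j$ minimum at this feasible point bounds the contribution of $j$ from above by $-\la\Delta_{(i,j)},\nu_{ij}\ra$. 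Summing the $i$-term against the $\abs{J}$ neighbour terms telescopes to $D(\phi+\Delta)-D(\phi)\le 0$, the desired inequality.

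It remains to chain the estimate across a whole sweep of Algorithm~\ref{alg:AbstractMessagePassing}, and this is where I expect the only genuine difficulty. Because the single-step bound is attained with equality, every inequality used above is tight; in particular $x_i^*$ and the matched $x_j^*$ stay minimizers after the update, so one can try to exhibit shrunken witness sets $\BS_i'\subseteq\argmin\la\theta_i^{\phi+\Delta},\cdot\ra$ and argue that $\theta^{\phi+\Delta}$ is again marginally consistent, allowing the estimate to be reapplied to the next factor visited. The delicate bookkeeping is verifying that these smaller sets still satisfy $A_{(i,j)}(\BS_i')=A_{(j,i)}(\BS_j')$ on \emph{every} coupling edge, not only those touched by the current message; a related subtlety, already handled above by choosing $\BS_i$ to be the full $\argmin$, is that Procedure~\ref{alg:MessagePassingUpdate} may pick any element of $\argmin_{x_i}\la\theta_i^\phi,x_i\ra$, which must still be covered by $A_{(i,j)}(\BS_i)$. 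Once persistence of marginal consistency is established, monotonicity keeps $D$ constant throughout the sweep, proving that a marginally consistent reparametrization is a fixed point of the algorithm.
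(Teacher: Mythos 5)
Your overall plan coincides with the paper's: prove that a single update of Procedure~\ref{alg:MessagePassingUpdate}, applied at a marginally consistent reparametrization, leaves $D$ unchanged and preserves marginal consistency, then iterate over the sweep; the factor-by-factor cancellation in your second paragraph is essentially the paper's computation. The genuine gap sits exactly where you write ``already handled above by choosing $\BS_i$ to be the full $\argmin$'': you are not free to make that choice. Marginal consistency is an existential hypothesis --- \emph{some} non-empty sets $\BS_i\subseteq\argmin_{x_i\in\SX_i}\la\theta^{\phi}_i,x_i\ra$ satisfy $A_{(i,j)}(\BS_i)=A_{(j,i)}(\BS_j)$ on every edge --- and enlarging them to the full argmin sets can destroy this equality. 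For a CRF edge $uv$ where $\theta^{\phi}_u$ has two optimal labels $a,b$ but $(a,a)$ is the unique optimal edge labeling, the witness $\BS_u=\{a\}$, $\BS_{uv}=\{(a,a)\}$ is consistent while the full argmin sets are not; yet Procedure~\ref{alg:MessagePassingUpdate} may legitimately pick $x^*_u=b$, and then your neighbour step fails: there is no $x^*_{uv}\in\BS_{uv}$ whose $u$-marginal equals $\nu=A_{(u,uv)}x^*_u$, so there is no witness point at which to evaluate the bound $-\la\Delta,\nu\ra$. The paper closes this hole with Lemma~\ref{lemma:UpdateSignsMultipleMinima}: an update admissible w.r.t.\ one minimizer of $\theta^{\phi}_i$ is automatically admissible w.r.t.\ every other minimizer. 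With that lemma, the $\Delta$ computed from the procedure's arbitrary pick is also admissible for a genuine witness element $x^*_i\in\BS_i$, and your cancellation argument then runs verbatim with $\nu_{ij}:=A_{(i,j)}x^*_i$ taken from the witness (in fact, for the inequality alone you never need equality at factor $i$: bounding each new minimum by its value at the witness point suffices, and the cross terms cancel by consistency).

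The second gap is that persistence across the sweep is asserted, not proven. You say one can ``try to exhibit shrunken witness sets'' and call the verification delicate bookkeeping, but this is precisely the second half of the paper's proof (its part~(ii)), and no shrinking is needed: the \emph{same} $\BS$ remains a witness after the update, because every $x^*_i\in\BS_i$ stays optimal for $\theta^{\phi+\Delta}_i$ (again via Lemma~\ref{lemma:UpdateSignsMultipleMinima}) and every matched $x^*_j\in\BS_j$ stays optimal for $\theta^{\phi+\Delta}_j$ (Lemma~\ref{lemma:OptimalityAfterUpdate}, part~(i), applied to $-\Delta_{(i,j)}$ using $A_{(j,i)}x^*_j=A_{(i,j)}x^*_i$), while untouched factors keep their reparametrization unchanged. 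As written, your argument establishes the theorem only under the stronger hypothesis that the full argmin sets themselves form a consistent witness, which is a strictly weaker statement than the one claimed.
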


\myparagraph{Comparison to Subgradient Method.}
Decomposition~\IPSLP\ and more general ones can be solved via the subgradient method~\cite{DualDecompositionKomodakis}.
Similar to Algorithm~\ref{alg:AbstractMessagePassing}, it operates on dual variables $\phi$ and manipulates them by visiting each factor sequentially. Contrary to Algorithm~\ref{alg:AbstractMessagePassing}, subgradient algorithms converge to the optimum.
Moreover, on a per-iterations basis, computing subgradients is cheaper than using Algorithm~\ref{alg:AbstractMessagePassing}, as only~\eqref{eq:FactorOptimization} needs to be computed, while Algorithm~\ref{alg:AbstractMessagePassing} needs to solve~\eqref{eq:ReverseFactorOptimization} additionally.
However, for MAP-inference, the study~\cite{OpenGMBenchmark} has shown that subgradient-based algorithms converge much slower than message passing algorithms like TRWS~\cite{TRWSKolmogorov}.
In Section~\ref{sec:experiments} we confirm this for the graph matching problem as well.

The reason for this large empirical difference is that one iteration of the subgradient algorithm only updates those coordinates of dual variables $\phi$ that are affected by the current minimal labeling $x_i^* \in \argmin_{x_i \in \SX_i} \la \theta^{\phi}_i, x_i \ra$ (i.e. coordinates $k: (A^\T_{(i,j)} x_i^*)_k = 1$), while in Algorithm~\ref{alg:AbstractMessagePassing} \emph{all} coordinates of $\phi$ are taken into account. 
Also message passing implicitly chooses the stepsize so as to achieve monotonical convergence in Algorithm~\ref{alg:MessagePassingUpdate}, while subgradient based algorithms must rely on some stepsize rule that may either make too large or too small changes to the dual variables $\phi$.

\section{Experimental Evaluation}\label{sec:experiments}
Our experiments' goal is to illustrate applicability of the proposed technique, they are not an exhaustive evaluation.
The presented algorithms are only basic variants, which can be further improved and tuned to the considered problems.
Both issues are addressed in the specialized studies~\cite{GraphMatchingMessagePassing,MulticutMessagePassing} which are appended.
Still, we show that the presented basic variants are already able to surpass state-of-the-art specialized solvers on challenging datasets.
All experiments were run on a computer with a 2.2 GHz i5-5200U CPU and 8 GB RAM.

\begin{figure*}
        \begin{minipage}{0.22\textwidth}
                \centering
                \includegraphics[width=1.5\linewidth]{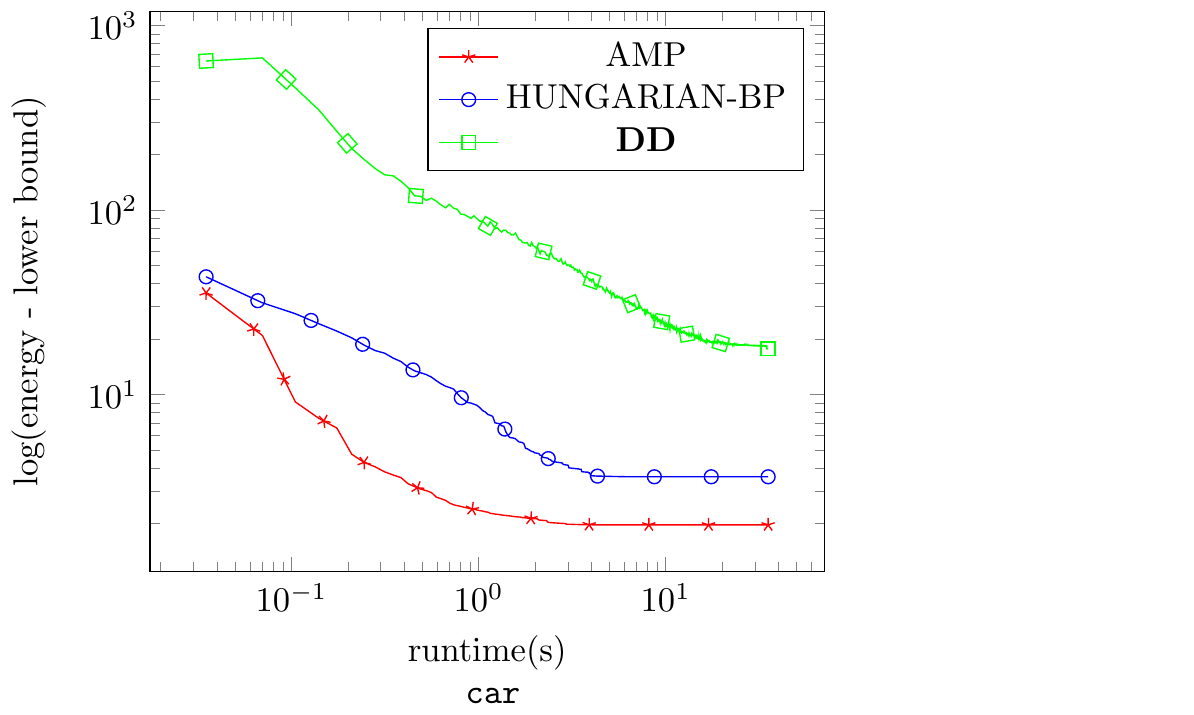}
        \end{minipage}
        \hspace{1cm}
        \begin{minipage}{0.22\textwidth}
                \centering
                \includegraphics[width=1.5\linewidth]{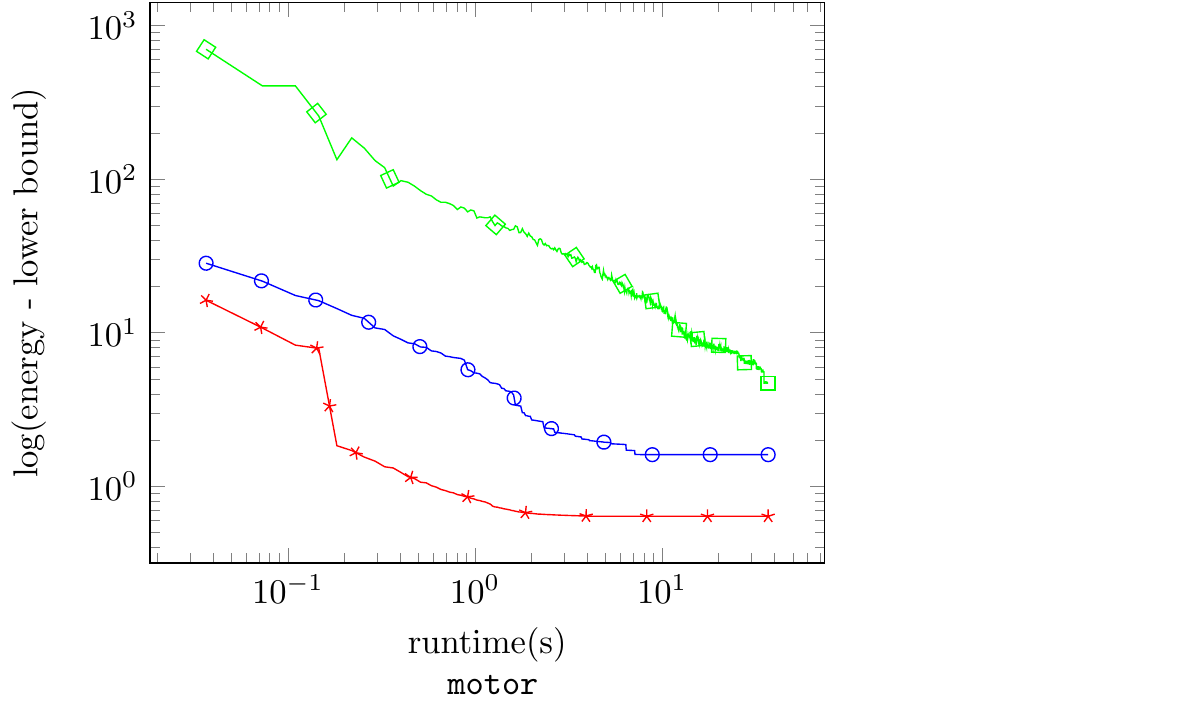}
        \end{minipage}
        \hspace{1cm}
        \begin{minipage}{0.22\textwidth}
                \centering
                \includegraphics[width=1.5\linewidth]{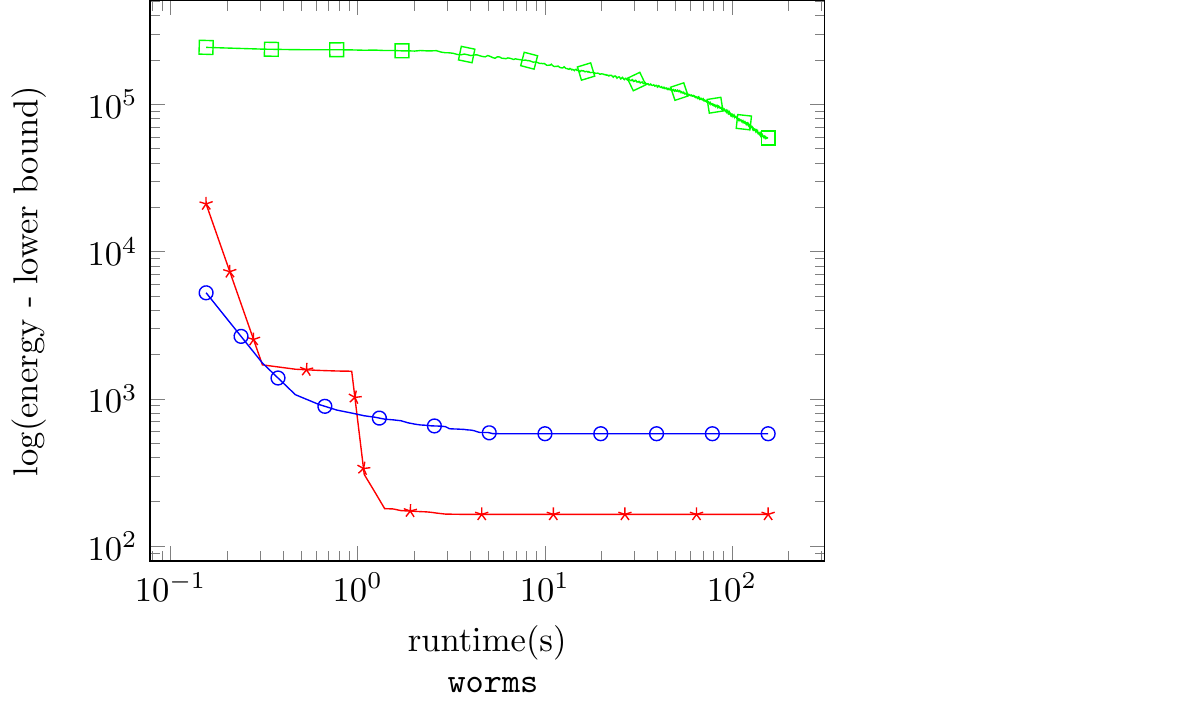}
        \end{minipage}
        \hfill
        \caption{
                Runtime plots comparing averaged $\log(\text{primal energy} - \text{dual lower bound})$ values on \texttt{car}, \texttt{motor} and \texttt{worms} graph matching datasets.
                Both axes are logarithmic. 
        }
   \label{fig:graph-matching-experiment}
\end{figure*}

\begin{figure*}
        \centering
        \begin{minipage}{0.22\textwidth}
                \centering
                \includegraphics[width=1.5\linewidth]{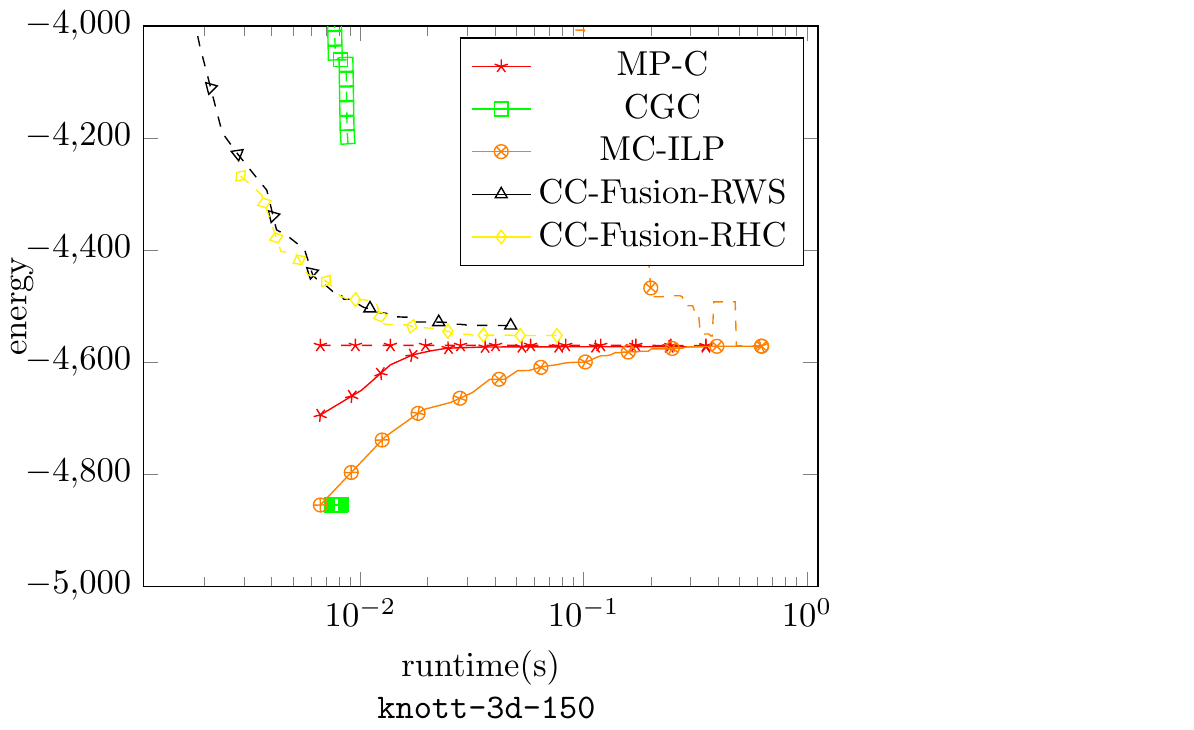}
        \end{minipage}
        \hspace{1cm}
        \begin{minipage}{0.22\textwidth}
                \centering
                \includegraphics[width=1.5\linewidth]{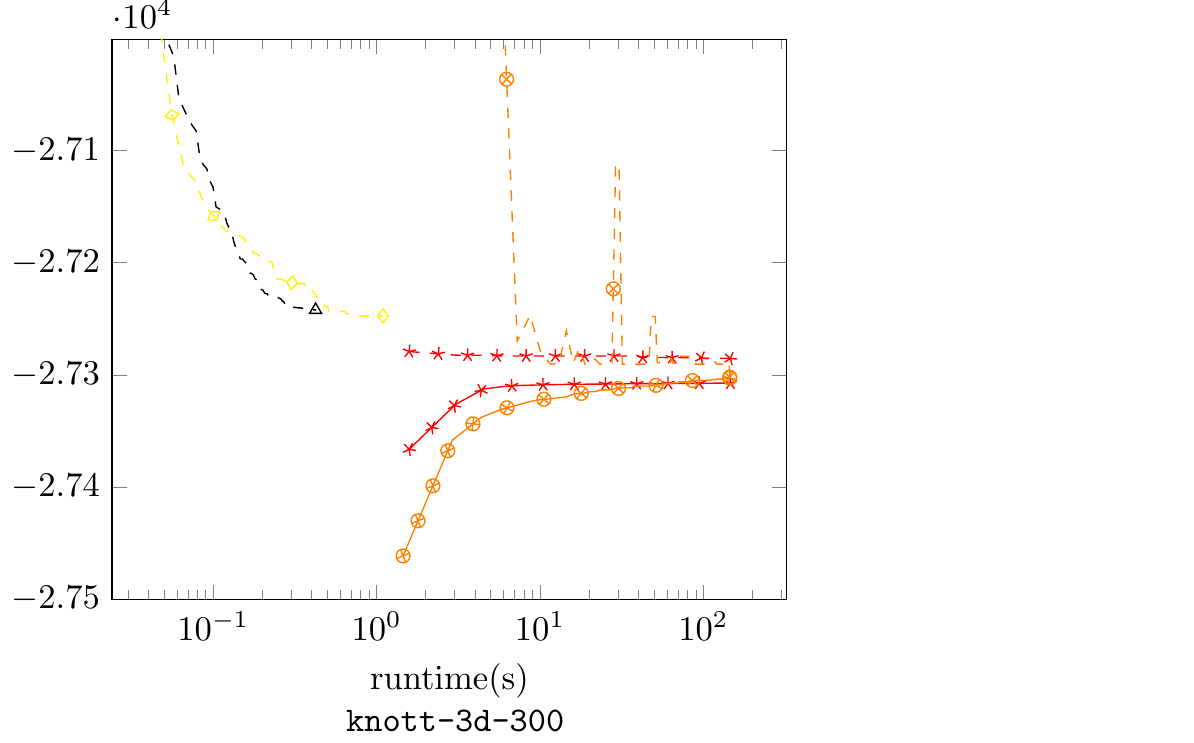}
        \end{minipage}
        \hspace{1cm}
        \begin{minipage}{0.22\textwidth}
                \centering
                \includegraphics[width=1.5\linewidth]{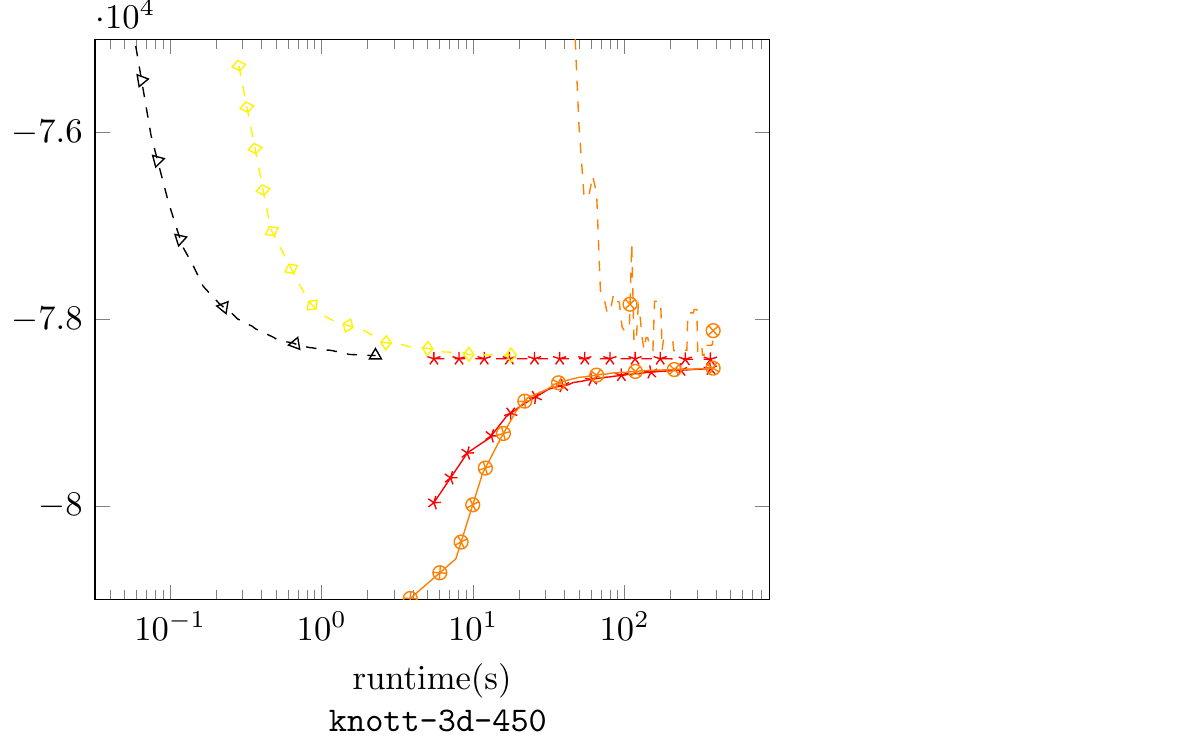}
        \end{minipage}
        \caption{
                Runtime plots comparing averaged primal/dual values on the three \texttt{knott-3d-\{150|300|450\}} multicut datasets. Values are averaged over all instances in the dataset. The x-axis are logarithmic. Continuous lines are dual lower bounds while corresponding dashed lines show primal solutions obtained by rounding.
        }
   \label{fig:multicut-experiment}
\end{figure*}

\subsection{Graph Matching}
\myparagraph{Solvers.}
We compare against two state-of-the-art algorithms: (i) the subgradient based dual decomposition solver~\cite{GraphMatchingDDTorresaniEtAl} abbreviated by \textbf{DD} and 
(ii) the recent ``hungarian belief propagation'' message passing algorithm~\cite{HungarianBP}, abbreviated as \textbf{HBP}.
While the authors of~\cite{HungarianBP} have embedded their solver in a branch-and-bound routine to produce exact solutions, we have reimplemented their message passing component but did not use branch and bound to make the comparison fair. 
Both algorithms \textbf{DD} and \textbf{HBP} outperformed alternative solvers at the time of their publication, hence we have not tested against~\cite{CombinatorialOptimizationMaxProductDuchi,CoveringTreesLowerBoundQuadraticAssignmentJarkony,SpectralTechniqueAssignmentLeordeanu,MRFSemidefiniteTorr,ProbabilisticSubgraphMatchingSchellewald,GraduatedAssignmentGold,FactorizedGraphMatching,LocalSparseMatching,IntegerFixedPointGraphMatching,RandomWalksForGraphMatching}.
We call our solver~\textbf{AMP}.
\myparagraph{Datasets.}
We selected three challenging datasets. The first two are the standard benchmark datasets \texttt{car} and \texttt{motor}, both used in~\cite{UnsupervisedLearningForGraphMatching}, containing 30 pairs of cars and 20 pairs of motorbikes with keypoints to be matched 1:1. The images are taken from the VOC PASCAL 2007 challenge~\cite{VocPascal}. Costs are computed from features as in~\cite{UnsupervisedLearningForGraphMatching}.
Instances are densely connected graphs with $20$ -- $60$ nodes.
The third one is the novel \texttt{worms} datasets~\cite{KainMueller2014active}, containing 30 problem instances coming from bioimaging. 
The problems are made of sparsely connected graphs with up to $600$ nodes and up to $1500$ labels.
To our knowledge, the \texttt{worms} dataset contains the largest graph matching instances ever considered in the literature.
For runtime plots showing averaged logarithmic primal/dual gap over all instances of each dataset see Fig.~\ref{fig:graph-matching-experiment}.
\myparagraph{Results.}
Our solver \textbf{AMP} consistently outperforms \textbf{HBP} and \textbf{DD} w.r.t. primal/dual gap and anytime performance
Most markedly on the largest \texttt{worms} dataset, the subgradient based algorithm \textbf{DD} struggles hard to decrease the primal/dual gap, while \textbf{AMP} gives reasonable results.


\subsection{Multicuts}
\paragraph{Solvers.}
We compare against state-of-the-art multicut algorithms implemented in the OpenGM~\cite{OpenGMBenchmark} library, namely (i) the branch-and-cut based solver \textbf{MC-ILP}~\cite{HigherOrderSegmentationByMulticuts} utilizing the ILP solver CPLEX~\cite{cplex}, (ii) the heuristic primal ``fusion move'' algorithm \textbf{CC-Fusion}~\cite{FusionMoveCC} with random hierarchical clustering and random watershed proposal generator, denoted by the suffixes \textbf{-RHC} and \textbf{-RWS} and (iii) the heuristic primal ``Cut, Glue \& Cut'' solver \textbf{CGC}~\cite{CGC}.
Those solvers were shown to outperform other multicut algorithms~\cite{FusionMoveCC}.
Algorithm \textbf{MC-ILP} provides both upper and lower bounds, while \textbf{CC-Fusion} and \textbf{CGC} are purely primal algorithms. 
We call our message passing solver with cycle constraints added in a cutting plane fashion \textbf{MP-C}.

\paragraph{Datasets.}
A source of large scale problems comes from electron microscopy of brain tissue, for which we wish to obtain neuron segmentation.
We have selected three datasets \texttt{knott-3d-\{150|300|450\}}
of increasingly large size~\cite{OpenGMBenchmark}, each consisting of 8 instances.
Instances have $\leq972$, $5896$ and $17074$ 
nodes and $\leq5656$, $36221$, and $107060$ 
edges respectively.
\myparagraph{Results.}
For plots showing dual bounds and primal solution objectives over time see Figure~\ref{fig:multicut-experiment}. 
Our algorithm \textbf{MP-C} combines advantages of LP-based techniques awith those of primal heuristics:
It delivers high dual lower bounds faster than \textbf{MC-ILP}.
Its has fast primal convergence speed and delivers primal solutions comparable/superior to \textbf{CGC}'s and \textbf{CC-Fusion}'s.

\section{Acknowledgments}
The authors would like to thank Vladimir Kolmogorov for helpful discussions.
This work is partially funded by the European Research Council under the European Unions Seventh Framework Programme (FP7/2007-2013)/ERC grant agreement no 616160.

{\small
\bibliographystyle{ieee}
\bibliography{../literatur}
}

\clearpage
\section{Supplementary Material}
\myparagraph{Proof of Proposition~\ref{prop:repa-invariant}}
\begin{proposition*}
$\sum_{i\in\BF} \la \theta_i,\mu_i\ra = \sum_{i\in\BF} \la \theta^{\phi}_i,\mu_i\ra$, whenever $\mu_1,\ldots,\mu_k$ obey the coupling constraints.
\end{proposition*}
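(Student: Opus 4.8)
The plan is to reduce the identity to the coupling constraints by a direct expansion of the reparametrized costs. Subtracting the two sides and using the definition $\theta^{\phi}_i = \theta_i + \sum_{j : ij \in \BE} A_{(i,j)}^\T \phi_{(i,j)}$, the claim is equivalent to showing that the aggregate correction term vanishes:
\begin{equation*}
\sum_{i\in\BF} \la \theta^{\phi}_i,\mu_i\ra - \sum_{i\in\BF} \la \theta_i,\mu_i\ra
= \sum_{i\in\BF} \sum_{j : ij \in \BE} \la A_{(i,j)}^\T \phi_{(i,j)},\mu_i\ra .
\end{equation*}

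Next I would move the matrix $A_{(i,j)}$ to the other side of the inner product, rewriting each summand as $\la \phi_{(i,j)}, A_{(i,j)}\mu_i\ra$. The key combinatorial step is to observe that the double sum runs over all \emph{ordered} pairs $(i,j)$ with $ij\in\BE$, so each unordered edge $ij\in\BE$ contributes exactly the two terms $\la \phi_{(i,j)}, A_{(i,j)}\mu_i\ra$ and $\la \phi_{(j,i)}, A_{(j,i)}\mu_j\ra$. Grouping the sum edge-by-edge and invoking the antisymmetry $\phi_{(j,i)} = -\phi_{(i,j)}$, the pair collapses into $\la \phi_{(i,j)},\, A_{(i,j)}\mu_i - A_{(j,i)}\mu_j\ra$.

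Finally, I would apply the coupling constraint $A_{(i,j)}\mu_i = A_{(j,i)}\mu_j$, which holds by assumption for every $ij\in\BE$; this makes the second argument of every inner product the zero vector, so the entire correction sum is zero and the two objectives coincide. There is no real analytic difficulty here: the statement is an algebraic bookkeeping fact, and the only thing requiring care is correctly pairing up the doubled dual variables and keeping the antisymmetry relation consistent with the orientation of the constraint $A_{(i,j)}\mu_i = A_{(j,i)}\mu_j$. I expect the one point where a misstep is easy is exactly this pairing of ordered versus unordered indices, so I would write that grouping step out explicitly rather than leaving it implicit.
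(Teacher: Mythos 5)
Your proof is correct and follows essentially the same route as the paper's: expand $\theta^{\phi}_i$, move the coupling matrices across the inner product, pair the two ordered terms per unordered edge, and cancel them via the antisymmetry $\phi_{(i,j)} = -\phi_{(j,i)}$ together with the constraint $A_{(i,j)}\mu_i = A_{(j,i)}\mu_j$. The only difference is presentational — you make the ordered-versus-unordered pairing explicit, which the paper leaves implicit in its grouped sum.
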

\begin{proof} 
   \hfill
$
      \sum_{i \in \BF} \la \theta^{\phi}, \mu_i \ra = 
      \sum_{i \in \BF} \la \theta, \mu_i \ra + \underbrace{\sum_{ij \in \BE} \la \phi_{(i,j)}, A_{(i,j)} \mu_i \ra + \la \phi_{(j,i)}, A_{(j,i)} \mu_j \ra }_{(*)}
      =
      \sum_{i \in \BF} \la \theta, \mu_i \ra\,,
      $ \hfill
   where $(*) = 0$ due to $\phi_{(i,j)} = -\phi_{j,i)}$ and $A_{(i,j)} \mu_i = A_{(j,i)} \mu_j$.
\end{proof}

\myparagraph{Proof of Proposition~\ref{prop:repa-adjustment-is-LP}}
\begin{proposition*}
Let $\conv(\SX_i) = \{ \mu_i : A_i \mu_i \leq b_i\}$ with $A \in \{0,1\}^{n \times m}$.
Let the messages in problem~\eqref{eq:ReverseFactorOptimization} have size $n_1,\ldots,n_{\abs{J}}$.
Then~\eqref{eq:ReverseFactorOptimization} is a linear program with $O(n + n_1 + \ldots + n_{\abs{J}})$ variables and $O(m + n_1 + \ldots + n_{\abs{J}})$ constraints.
\end{proposition*}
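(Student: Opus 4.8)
The plan is to exhibit~\eqref{eq:ReverseFactorOptimization} explicitly as a linear program in the message variables $\Delta=(\Delta_{(i,j)})_{j\in J}$ (together with a set of auxiliary variables introduced below) and then simply count. First I would record that the objective is affine in $\Delta$: since $\theta_i^{\phi+\Delta}=\theta_i^{\phi}+\sum_{j\in J}A_{(i,j)}^{\T}\Delta_{(i,j)}$, we get $\la\delta,\theta_i^{\phi+\Delta}\ra=\la\delta,\theta_i^{\phi}\ra+\sum_{j\in J}\la A_{(i,j)}\delta,\Delta_{(i,j)}\ra$. The $\Delta$-variables number $n_1+\dots+n_{\abs{J}}$ by definition of the message sizes.

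Next I would unfold the feasible set $AD(\theta_i^{\phi},x_i^{*},J)$ into two groups of constraints, with $x_i^{*}$ treated as a fixed given point (it is computed in~\eqref{eq:FactorOptimization} before~\eqref{eq:ReverseFactorOptimization} is solved). The sign conditions~\eqref{equ:allowed-phi} contribute one bound per message coordinate, hence $n_1+\dots+n_{\abs{J}}$ affine inequalities. The remaining requirement, that $x_i^{*}$ remain a minimizer, $x_i^{*}\in\argmin_{x_i\in\SX_i}\la\theta_i^{\phi+\Delta},x_i\ra$, is the crux: read literally it is the family $\la\theta_i^{\phi+\Delta},x_i^{*}\ra\le\la\theta_i^{\phi+\Delta},x_i\ra$ indexed by the combinatorially many points $x_i\in\SX_i$, so it cannot be imposed directly.

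The main obstacle is to compress this family into polynomially many linear rows, and this is exactly where the hypothesis $\conv(\SX_i)=\{\mu_i:A_i\mu_i\le b_i\}$ is used. Because a linear functional attains the same minimum over $\SX_i$ and over $\conv(\SX_i)$, the optimality condition is equivalent to $\la\theta_i^{\phi+\Delta},x_i^{*}\ra=\min\{\la\theta_i^{\phi+\Delta},\mu_i\ra:A_i\mu_i\le b_i\}$. I would dualize this inner LP: by linear programming duality its value equals $\max\{-\la b_i,y\ra: A_i^{\T}y=-\theta_i^{\phi+\Delta},\ y\ge0\}$. By weak duality any dual-feasible $y$ satisfies $-\la b_i,y\ra\le\la\theta_i^{\phi+\Delta},x_i^{*}\ra$, so $x_i^{*}$ is a minimizer if and only if there is a dual-feasible $y$ closing this gap (the forward direction uses strong duality, which holds as $\conv(\SX_i)$ is a nonempty bounded polytope). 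Introducing $y\in\R^{n}$ as fresh variables, the optimality condition thus becomes the linear system consisting of the $m$ dual-feasibility equalities $A_i^{\T}y=-\theta_i^{\phi+\Delta}$, the sign bounds $y\ge0$, and the single tightness equality $-\la b_i,y\ra=\la\theta_i^{\phi+\Delta},x_i^{*}\ra$; all of these are affine in the joint variables $(\Delta,y)$ (note $\la\theta_i^{\phi+\Delta},x_i^{*}\ra=\la\theta_i^{\phi},x_i^{*}\ra+\sum_{j\in J}\la A_{(i,j)}x_i^{*},\Delta_{(i,j)}\ra$ is affine in $\Delta$).

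Finally I would collect and count. The variables are the $n_1+\dots+n_{\abs{J}}$ message coordinates together with the $n$ dual variables $y$, giving $O(n+n_1+\dots+n_{\abs{J}})$ in total. The constraint rows are the $m+1$ duality constraints and the $n_1+\dots+n_{\abs{J}}$ sign bounds on $\Delta$ (counting $y\ge0$ among the variable bounds), giving $O(m+n_1+\dots+n_{\abs{J}})$, as claimed. Since objective and all constraints are affine, the resulting problem is a linear program of the stated size; the only nonroutine ingredient is the duality reformulation of the optimality constraint, while the $0/1$ structure of $A_i$ plays no role in the counting.
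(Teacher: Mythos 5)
Your proposal is correct and follows essentially the same route as the paper's proof: both replace the combinatorial optimality requirement $x_i^*\in\argmin_{x_i\in\SX_i}\la\theta_i^{\phi+\Delta},x_i\ra$ by an LP-duality certificate over $\conv(\SX_i)=\{\mu_i:A_i\mu_i\le b_i\}$, introduce the dual vector $y\ge 0$ as auxiliary variables, and count. The only cosmetic difference is that the paper writes the certificate as dual feasibility plus complementary slackness $\la b_i-A_i\mu_i^*,y\ra=0$, whereas you write it as dual feasibility plus the zero-gap equation $-\la b_i,y\ra=\la\theta_i^{\phi+\Delta},x_i^*\ra$; given dual feasibility these are algebraically identical, and your write-up is in fact more careful about the passage from $\SX_i$ to $\conv(\SX_i)$ and about why strong duality applies.
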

\begin{proof}
From LP-duality we know that $\mu_i^* \in \argmin_{\mu_i : A \mu_i \leq b_i} \la c, \mu_i \ra$ iff $\exists y\geq 0 : A_i^\T y = c_i$ and $\la b_i - A_i \mu_i^* , y \ra = 0$.
Hence,~\eqref{eq:ReverseFactorOptimization} can be rewritten as
\begin{equation}
   \label{eq:ReverseFactorOptimizationLP}
\begin{array}{cl}
\max\limits_{y \geq 0, \Delta_{(i,j_1)},\ldots,\Delta_{(i,j_l)}} 
& \la \delta,\theta^{\phi+\Delta} \ra \\
\text{s.t.}
& \la b_i - A_i \mu_i^*, y \ra = 0 \\
& A_i^\T y = \theta^{\phi+\Delta} \\
& \Delta_{(i,j)}(s) 
\begin{cases}
\le 0, & \nu_i(s) =0\\
\ge 0, & \nu_i(s) =1
\end{cases}\\
  &\hfill\text{where}\ \nu_i:=A_{(i,j)} \mu^*_i
\end{array} 
\end{equation}
$\theta^{\phi + \Delta}$ is a linear expression and $\mu^*$ is constant during the computation, hence~\eqref{eq:ReverseFactorOptimizationLP} is a LP.
\end{proof}

\myparagraph{Proof of Lemma~\ref{prop:dual-monotonicity} and Lemma~\ref{prop:dual-monotonicity-extended}}
\begin{lemma*}
Let $ij\in\BE$ be a pair of factors related  by the coupling constraints and $\phi_{(i,j)}$ be a corresponding dual vector. 
Let $x_i^* \in \argmin\limits_{x_i\in\SX_i} \la \theta^{\phi}_i, x_i \ra$ and $\Delta_{(i,j)}$ satisfy
\begin{equation}\label{equ:allowed-phi}
\Delta_{(i,j)}(s) 
\begin{cases}
\ge 0, & \nu(s) =1\\
\le 0, & \nu(s) =0
\end{cases},\ \text{where}\ \nu:=A_{(i,j)} x^*_i\,.
\end{equation}
Then $x_i^* \in \argmin\limits_{x_i\in\SX_i} \la \theta^{\phi+\Delta}_i, x_i \ra$
implies ${D(\phi) \le D(\phi+\Delta)}$.
\end{lemma*}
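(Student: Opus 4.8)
The plan is to exploit that the perturbation $\Delta$ has only the two nonzero blocks $\Delta_{(i,j)}$ and $\Delta_{(j,i)}=-\Delta_{(i,j)}$, so that in $D(\phi+\Delta)=\sum_{k\in\BF}\min_{x_k\in\SX_k}\la\theta^{\phi+\Delta}_k,x_k\ra$ every summand with $k\neq i,j$ is literally unchanged. Writing $c:=\Delta_{(i,j)}$, the only two reparametrizations that move are $\theta^{\phi+\Delta}_i=\theta^{\phi}_i+A_{(i,j)}^\T c$ and $\theta^{\phi+\Delta}_j=\theta^{\phi}_j-A_{(j,i)}^\T c$. Hence it suffices to show that the sum of the corresponding changes in the two pointwise minima is nonnegative.

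For factor $i$ I would use the hypothesis directly. Since $x_i^*$ minimizes $\la\theta^{\phi}_i,\cdot\ra$ and, by assumption, also minimizes $\la\theta^{\phi+\Delta}_i,\cdot\ra$, both minima are attained at $x_i^*$, so the factor-$i$ change equals
\begin{equation*}
\la\theta^{\phi+\Delta}_i,x_i^*\ra-\la\theta^{\phi}_i,x_i^*\ra=\la A_{(i,j)}^\T c,x_i^*\ra=\la c,A_{(i,j)}x_i^*\ra=\la c,\nu\ra,
\end{equation*}
with $\nu=A_{(i,j)}x_i^*$. I emphasize that the assumption ``$x_i^*$ stays optimal'' is exactly what turns this into an \emph{equality} rather than a one-sided bound; dropping it would give the inequality in the wrong direction, which is why it is essential here.

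The main work is the factor-$j$ term, where I have no control over its minimizer, so I would instead bound the perturbation uniformly in $x_j$. Using the integrality guaranteed by Definition~\ref{def:IPSLP}, namely $A_{(j,i)}x_j\in\{0,1\}^K$ and $\nu\in\{0,1\}^K$, I argue coordinatewise via \eqref{equ:allowed-phi}: where $\nu(s)=1$ one has $c(s)\ge0$, hence $c(s)\,(A_{(j,i)}x_j)(s)\le c(s)$; where $\nu(s)=0$ one has $c(s)\le0$, hence $c(s)\,(A_{(j,i)}x_j)(s)\le0$. Summing over $s$ yields $\la c,A_{(j,i)}x_j\ra\le\la c,\nu\ra$ for \emph{every} $x_j\in\SX_j$. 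Consequently
\begin{equation*}
\min_{x_j\in\SX_j}\la\theta^{\phi}_j-A_{(j,i)}^\T c,x_j\ra\ge\min_{x_j\in\SX_j}\la\theta^{\phi}_j,x_j\ra-\la c,\nu\ra,
\end{equation*}
so the factor-$j$ change is at least $-\la c,\nu\ra$.

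Adding the two contributions, the factor-$i$ change $\la c,\nu\ra$ and the factor-$j$ lower bound $-\la c,\nu\ra$ cancel, giving $D(\phi+\Delta)-D(\phi)\ge0$. I expect the uniform bound $\la c,A_{(j,i)}x_j\ra\le\la c,\nu\ra$ to be the crux: it is the only step that simultaneously uses the sign pattern~\eqref{equ:allowed-phi} and the $\{0,1\}$-integrality of $A_{(j,i)}x_j$, and it is what lets me sidestep any knowledge of factor $j$'s minimizer. The extension to Lemma~\ref{prop:dual-monotonicity-extended} then follows by applying the same coordinatewise argument to each neighbor $j\in J$ independently, since the blocks $\Delta_{(i,j)}$ act on disjoint coordinates and the factor-$i$ change aggregates to $\sum_{j\in J}\la\Delta_{(i,j)},A_{(i,j)}x_i^*\ra$, matching the sum of the per-neighbor lower bounds.
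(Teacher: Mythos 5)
Your proof is correct and follows essentially the same route as the paper's: both rest on the retained optimality of $x_i^*$ (making the factor-$i$ change exactly $\la \Delta_{(i,j)}, \nu\ra$) and on the coordinatewise inequality $\la \Delta_{(i,j)}, A_{(j,i)}x_j\ra \le \la \Delta_{(i,j)}, \nu\ra$ derived from the sign pattern~\eqref{equ:allowed-phi} and the $\{0,1\}$-integrality of Definition~\ref{def:IPSLP}. The only difference is presentational — you bound the factor-$j$ perturbation uniformly over all $x_j$, whereas the paper evaluates it at the pre- and post-update minimizers $x_j^*$ and $x_j^{**}$; specializing your uniform bound to $x_j^{**}$ recovers the paper's argument exactly.
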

\begin{proof}
Let $x_j^* \in \argmin_{x_j \in \SX_j} \la \theta^{\phi}_j,x_j \ra$ be a solution of~\eqref{eq:FactorOptimization} at which the dual lower bound~\eqref{eq:DualLowerBound} is attained before the update
and $x^{**}_j \in \argmin_{x_j \in \SX_j} \la \theta^{\phi} - A_{(j,i)}^\T \Delta^*_{(i,j)}, x_j \ra$ be an integral solution at which the dual lower bound is attained after $\phi$ has been updated.
Variable $x_i^*$ as chosen in~\eqref{eq:FactorOptimization} is optimal for $\theta^{\phi}$ and for $\theta^{\phi + \Delta}$ by construction.
We need to prove 
\begin{multline}
\la \theta^{\phi}_i,x_i^*\ra + \sum_{j \in J} \la \theta^{\phi}_j, x_j^*\ra \\
\leq \la \theta^{\phi}_i + \sum_{j \in J} A^\T_{(i,j)} \Delta^*_{(i,j)} ,x^*_i\ra + \sum_{j \in J} \la \theta^{\phi}_j - A^\T_{(j,i)} \Delta^*_{(i,j)} , x^{**}_j\ra \,.
\end{multline}
We shuffle all terms with variables $\Delta^*_{(i,j)}$, $j \in J$ to the right side and all other terms to the left side.
\begin{multline}
\la \theta_i^{\phi}, x^*_i - x^*_i \ra + \sum_{j \in J} \la \theta_j^{\phi}, x^*_j - x^{**}_j \ra \\
\leq \la \sum_{j \in } A^\T_{(i,j)} \Delta^*_{(i,j)}, x^*_i \ra - \sum_{j \in J} \la A^\T_{(j,i)} \Delta^*_{(i,j)}, x^{**}_j \ra
\end{multline}
All terms on the left side are smaller than zero due to the choice of $x^*_j$ being minimizers w.r.t. $\theta^{\phi}_j$.
Hence, it will be enough to prove the above inequality when assuming the left side to be zero.
We rewrite the scalar products by transposing $A^\T_{(i,j)}$ and $A^\T_{(j,i)}$. 
\begin{equation}
0 \leq \sum\nolimits_{j \in J} \left\{\la \Delta^*_{(i,j)}, A_{(i,j)} x_i^* - A_{(j,i)} x^{**}_j \ra \right\}
\end{equation}
Due to $A_{(j,i)} x^{**}_j \in \{0,1\}^{\dim(\phi_{(i,j)})}$ and $A_{(i,j)} x_i^* \in \{0,1\}^{\dim(\phi_{(i,j)})}$ by Definition~\ref{def:IPSLP} and $\Delta^*_{(i,j)} \lessgtr 0$ whenever $A_{(i,j)} x^*_i \lessgtr 0$, the result follows.
\end{proof}

\begin{lemma*}
 \hspace{-1pt}Let $\Delta\in AD(\theta^{\phi}_i,x^*_i,J)$ then $D(\phi)\le D(\phi+\Delta)$. 
\end{lemma*}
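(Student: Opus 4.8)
The plan is to reduce Lemma~\ref{prop:dual-monotonicity-extended} to the two-factor case already established in Lemma~\ref{prop:dual-monotonicity}. The key observation is that an admissible vector $\Delta \in AD(\theta^{\phi}_i, x^*_i, J)$ is, by definition, built from coordinate blocks $\Delta_{(i,j)}$ for $j \in J$, each of which individually satisfies the sign condition~\eqref{equ:allowed-phi} relative to $\nu = A_{(i,j)} x^*_i$, and the admissibility guarantees a \emph{common} minimizer $x^*_i$ that stays optimal for $\theta^{\Delta}_i$ (equivalently $\theta^{\phi+\Delta}_i$) after all blocks are applied at once. So the content of the extended lemma is precisely that one may change several incident dual blocks simultaneously rather than one at a time.

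First I would set up the dual lower bound decomposition exactly as in the proof of Lemma~\ref{prop:dual-monotonicity}. Let $x^*_j \in \argmin_{x_j \in \SX_j} \la \theta^{\phi}_j, x_j \ra$ be minimizers attaining $D(\phi)$ on each neighbor $j \in J$, and let $x^{**}_j \in \argmin_{x_j \in \SX_j} \la \theta^{\phi}_j - A^\T_{(j,i)} \Delta_{(i,j)}, x_j \ra$ be the minimizers attaining $D(\phi+\Delta)$ after the update. Because $\Delta$ has nonzero components only on the blocks $(i,j)$ and $(j,i)$ for $j \in J$, the only factors whose reparametrization changes are $i$ itself and the neighbors in $J$; every other factor contributes identically to $D(\phi)$ and $D(\phi+\Delta)$ and thus cancels. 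For factor $i$, admissibility gives $x^*_i \in \argmin \la \theta^{\Delta}_i, x_i \ra$, so its contribution is handled by the same common-minimizer argument.

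I would then carry out the identical shuffling step: writing $D(\phi+\Delta) - D(\phi)$ and moving all $\Delta$-dependent terms to one side yields the inequality
\begin{equation}
0 \leq \sum_{j \in J} \la \Delta_{(i,j)}, A_{(i,j)} x^*_i - A_{(j,i)} x^{**}_j \ra\,,
\end{equation}
after discarding the left-hand terms $\la \theta^{\phi}_j, x^*_j - x^{**}_j \ra \le 0$ that are nonpositive because each $x^*_j$ minimizes $\la \theta^{\phi}_j, \cdot \ra$. The sum over $J$ now splits into independent per-neighbor terms, and each is nonnegative by exactly the same integrality-and-sign reasoning as before: $A_{(i,j)} x^*_i = \nu$ and $A_{(j,i)} x^{**}_j$ both lie in $\{0,1\}^K$ by Definition~\ref{def:IPSLP}, while $\Delta_{(i,j)}(s) \ge 0$ when $\nu(s)=1$ and $\Delta_{(i,j)}(s) \le 0$ when $\nu(s)=0$, so each product $\Delta_{(i,j)}(s)\,(\nu(s) - (A_{(j,i)} x^{**}_j)(s))$ is nonnegative coordinatewise.

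The main obstacle is bookkeeping rather than mathematics: I must verify that applying all blocks of $\Delta$ simultaneously does not create cross-terms between different neighbors $j, j' \in J$ that would spoil the per-neighbor splitting. This is where the structure of the reparametrization matters — since $\theta^{\Delta}_i = \theta_i + \sum_{j \in J} A^\T_{(i,j)} \Delta_{(i,j)}$ is additive over $J$ and each neighbor $j$ is coupled to $i$ through its own disjoint block $\phi_{(j,i)}$, there is genuinely no coupling between distinct neighbors, and the sum decouples cleanly. Once this independence is recorded, the extended lemma follows by summing the two-factor estimate over all $j \in J$, so the proof is essentially the previous one applied blockwise.
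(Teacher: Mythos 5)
Your proof is correct and takes essentially the same approach as the paper: the paper's proof of Lemma~\ref{prop:dual-monotonicity} is in fact already written with sums over a set $J$ of neighbors, and the paper settles Lemma~\ref{prop:dual-monotonicity-extended} by declaring it analogous---your argument spells out exactly that analogy (the common minimizer $x^*_i$ supplied by admissibility, discarding the nonpositive terms $\la \theta^{\phi}_j, x^*_j - x^{**}_j\ra$, and the per-neighbor sign-plus-integrality estimate). Your explicit check that the blocks $\Delta_{(i,j)}$ for distinct $j \in J$ do not produce cross-terms is the same decoupling the paper's summed inequality uses implicitly.
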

\begin{proof}
  Analoguous to the proof of Lemma~\ref{prop:dual-monotonicity}.
\end{proof}

\myparagraph{Proof of Theorem~\ref{thm:Algorithm-is-Dual-Monotone}}
\begin{theorem*}
   Algorithm~\ref{alg:AbstractMessagePassing} monotonically increasis the dual lower bound~\eqref{eq:DualLowerBound}.
\end{theorem*}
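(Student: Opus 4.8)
The plan is to exhibit the whole algorithm as a finite composition of elementary dual updates, each of which falls under the admissibility framework of Lemma~\ref{prop:dual-monotonicity-extended} and therefore cannot decrease $D$; since monotonicity is preserved under composition, this yields the claim. I would treat the two phases of Algorithm~\ref{alg:AbstractMessagePassing} — \textbf{Receive Messages} and \textbf{Send Messages} — separately, for a fixed factor $i \in \BF$.

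First, the receive phase. For each $j \in J_{receive}$ the increment $\Delta^*_{(j,\{i\})}$ is the output of Procedure~\ref{alg:MessagePassingUpdate} run on the \emph{current} dual point $\phi$; by the feasibility constraint in~\eqref{eq:ReverseFactorOptimization} it lies in $AD(\theta^{\phi}_j, x^*_j, \{i\})$. Applying Lemma~\ref{prop:dual-monotonicity-extended} directly gives $D(\phi) \le D(\phi + \Delta^*_{(j,\{i\})})$. As these increments are applied one after another, each relative to the dual point left by the previous one, the receive phase is monotone step by step.

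The send phase is more delicate, and I expect it to be the main obstacle. Here the messages $\Delta^*_{(i,J_1)}, \ldots, \Delta^*_{(i,J_l)}$ are all computed with Procedure~\ref{alg:MessagePassingUpdate} at the \emph{same} dual point $\phi^0$ (the state reached after receiving), using a \emph{common} minimizer $x^*_i \in \argmin_{x_i} \la \theta^{\phi^0}_i, x_i \ra$, yet they are applied sequentially and scaled by the weights $\omega_{J_m}$. Writing $\Phi_m := \phi^0 + \sum_{p \le m} \omega_{J_p}\Delta^*_{(i,J_p)}$, the increment from $\Phi_{m-1}$ to $\Phi_m$ is $\omega_{J_m}\Delta^*_{(i,J_m)}$, and I want to invoke Lemma~\ref{prop:dual-monotonicity-extended} with base reparametrization $\theta^{\Phi_{m-1}}_i$ and this increment. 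The sign conditions~\eqref{equ:allowed-phi} hold because $\Delta^*_{(i,J_m)}$ satisfies them w.r.t. $\nu = A_{(i,j)}x^*_i$ and multiplying by $\omega_{J_m} \ge 0$ preserves signs; note $\nu$ is fixed by the common $x^*_i$, and the partition $J_1 \dcup \cdots \dcup J_l$ gives the increments disjoint support, so earlier increments do not disturb it. It therefore remains to verify that the common $x^*_i$ stays optimal at every intermediate point, i.e. $x^*_i \in \argmin_{x_i} \la \theta^{\Phi_m}_i, x_i \ra$ for all $m$.

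This optimality-persistence is the heart of the argument. For fixed $x_i \in \SX_i$, the map $\omega \mapsto \la \theta^{\phi^0}_i + \sum_p \omega_{J_p} A^\T_{(i,J_p)}\Delta^*_{(i,J_p)}, x_i \ra$ is affine in the weight vector $\omega = (\omega_{J_1}, \ldots, \omega_{J_l})$. By construction $x^*_i$ is optimal at $\omega = 0$ (optimality for $\theta^{\phi^0}_i$) and at each standard basis point $\omega = e_p$ (this is exactly the admissibility $\Delta^*_{(i,J_p)} \in AD(\theta^{\phi^0}_i, x^*_i, J_p)$ certified by Procedure~\ref{alg:MessagePassingUpdate}). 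The feasible weight region $\{\omega \ge 0 : \sum_p \omega_{J_p} \le 1\}$ is precisely the convex hull of $\{0, e_1, \ldots, e_l\}$; since, for every competitor $x_i$, the difference of the two affine functions is nonpositive at all these vertices, it is nonpositive on the whole region. Each partial state $\Phi_m$ corresponds to a feasible $\omega$ (take $\omega_{J_p}=0$ for $p>m$), so $x^*_i$ is optimal at every $\Phi_m$. This gives $\omega_{J_m}\Delta^*_{(i,J_m)} \in AD(\theta^{\Phi_{m-1}}_i, x^*_i, J_m)$, hence $D(\Phi_{m-1}) \le D(\Phi_m)$ by Lemma~\ref{prop:dual-monotonicity-extended}, and the send phase is monotone. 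Chaining the monotone receive and send steps over all $i \in \BF$ shows that one iteration — and therefore the algorithm — never decreases the dual bound~\eqref{eq:DualLowerBound}.
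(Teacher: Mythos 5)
Your receive-phase argument coincides with the paper's (a direct appeal to Lemma~\ref{prop:dual-monotonicity}, resp.\ Lemma~\ref{prop:dual-monotonicity-extended}), but your send-phase argument takes a genuinely different route. The paper disposes of the ``computed in parallel, applied jointly'' difficulty in one stroke: the updated point is the convex combination $\bigl(1-\sum_p \omega_{J_p}\bigr)\phi + \sum_p \omega_{J_p}\bigl(\phi + \Delta^*_{(i,J_p)}\bigr)$, each $\Delta^*_{(i,J_p)}$ is individually non-decreasing by Lemma~\ref{prop:dual-monotonicity-extended}, and concavity of the dual function $D$ (Jensen's inequality, using $\sum_p \omega_{J_p}\leq 1$) finishes the proof. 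You instead establish optimality persistence of a fixed minimizer $x^*_i$ over the whole weight simplex $\conv\{0,e_1,\ldots,e_l\}$ --- an affine function nonpositive at the vertices is nonpositive on the hull --- and then apply Lemma~\ref{prop:dual-monotonicity-extended} sequentially to the scaled increments $\omega_{J_m}\Delta^*_{(i,J_m)}$. Both are sound. The paper's route is shorter and needs nothing about intermediate states; yours never invokes concavity of $D$, is more elementary in that sense, and yields slightly more: the bound is monotone after every individual weighted message, and $x^*_i$ remains optimal throughout the send phase --- exactly the kind of persistence the paper needs anyway in its fixed-point analysis (Theorem~\ref{thm:MarginalConsistencyFixPoint}).

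There is one caveat you must close. You assert that the $l$ calls to Procedure~\ref{alg:MessagePassingUpdate} in the send phase use a \emph{common} minimizer $x^*_i$. The algorithm does not guarantee this: each call independently computes its own $x^*_i$ in~\eqref{eq:FactorOptimization}, and when $\argmin_{x_i\in\SX_i}\la\theta^{\phi}_i,x_i\ra$ contains several elements, different calls may return different minimizers. Your affine/vertex argument then breaks as stated, because optimality of your fixed $x^*_i$ at the vertex $e_p$ is precisely admissibility of $\Delta^*_{(i,J_p)}$ with respect to \emph{that} $x^*_i$, which the feasibility constraint of~\eqref{eq:ReverseFactorOptimization} only certifies for the minimizer used in call $p$. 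The repair is available in the paper's own toolkit: Lemma~\ref{lemma:UpdateSignsMultipleMinima} of the supplement shows that admissibility with respect to one minimizer of $\theta^{\phi}_i$ implies admissibility (sign conditions~\eqref{equ:allowed-phi} included) with respect to every other minimizer, so you may fix an arbitrary $x^*_i\in\argmin_{x_i\in\SX_i}\la\theta^{\phi^0}_i,x_i\ra$ and transfer the admissibility of every $\Delta^*_{(i,J_p)}$ to it before running your simplex argument. Note that the paper's Jensen argument is immune to this issue, since it only needs each update to be individually non-decreasing, regardless of which minimizer certified it.
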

\begin{proof}
We prove that (i) the receiving messages and (ii) the sending messages step improve~\eqref{eq:DualLowerBound}.

\noindent(i) Directly apply Lemma~\ref{prop:dual-monotonicity}.
(ii) The difficulty here is that we compute descent directions from the current dual variables $\phi$ in parallel and then apply all of them simultaneously. 
  By Lemma~\ref{prop:dual-monotonicity-extended}, the send message step is non-decreasing when called for each set $J_1,\ldots,J_l$ in Algorithm~\ref{alg:AbstractMessagePassing}.
The dual lower bound $L(\phi)$ is concave, hence we apply Jensen's inequality and note that $\omega_1 + \ldots + \omega_l \leq 1$ to obtain the result.
\end{proof}

\myparagraph{Proof of Theorem~\ref{thm:MarginalConsistencyFixPoint}}
\begin{theorem*}
If $\theta^{\phi}$ is marginally consistent, the dual lower bound $D(\phi)$ cannot be improved by Algorithm~\ref{alg:AbstractMessagePassing}.
\end{theorem*}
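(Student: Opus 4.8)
The plan is to reduce the theorem to a single elementary update and then recycle the computation behind Lemma~\ref{prop:dual-monotonicity}. Every operation Algorithm~\ref{alg:AbstractMessagePassing} performs --- both the receive step (a Procedure~\ref{alg:MessagePassingUpdate} call from a neighbour $j$ to $i$) and the send step --- applies only \emph{admissible} reparametrizations, and under such reparametrizations $D$ is monotone non-decreasing by Lemma~\ref{prop:dual-monotonicity-extended} and Theorem~\ref{thm:Algorithm-is-Dual-Monotone}. Hence it suffices to prove the \emph{one-sided} estimate $D(\phi+\Delta)\le D(\phi)$ for a single admissible $\Delta$; combined with monotonicity this forces $D(\phi+\Delta)=D(\phi)$, i.e.\ no improvement.

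Fix the factor $i$ being processed, its target set $J$, and let $\BS$ be sets witnessing marginal consistency. I would take the minimizer $x_i^*$ computed in Procedure~\ref{alg:MessagePassingUpdate} inside $\BS_i$ and, for each $j\in J$, invoke marginal consistency on $ij$: since $A_{(i,j)}x_i^*\in A_{(i,j)}(\BS_i)=A_{(j,i)}(\BS_j)$, there is an old-optimal $x_j^*\in\BS_j$ with $A_{(j,i)}x_j^*=A_{(i,j)}x_i^*$. Then I would split $D(\phi+\Delta)-D(\phi)$ into the change of the minimum at $i$ plus the changes at the neighbours $j\in J$. Admissibility keeps $x_i^*$ optimal at $i$, so the increase at $i$ is exactly $\sum_{j\in J}\la\Delta_{(i,j)},A_{(i,j)}x_i^*\ra$; bounding each neighbour's new minimum from above by evaluating at the matched $x_j^*$ gives, precisely as in the proof of Lemma~\ref{prop:dual-monotonicity},
\[
D(\phi+\Delta)-D(\phi)\ \le\ \sum_{j\in J}\la\Delta_{(i,j)},\,A_{(i,j)}x_i^*-A_{(j,i)}x_j^*\ra\ =\ 0 ,
\]
the final equality holding because the two coupling images coincide. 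Thus the step cannot increase $D$.

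To finish I would dispatch the remaining freedom in Algorithm~\ref{alg:AbstractMessagePassing}. The receive step is a send from $j$ to $\{i\}$, so it is already covered. For the send step, rather than appealing to Jensen's inequality I would observe that the combined move $\sum_m \omega_{J_m}\Delta^*_{(i,J_m)}$ is itself admissible for the union $\bigcup_m J_m$ relative to the same $x_i^*$: the sign conditions~\eqref{equ:allowed-phi} are preserved under the nonnegative scalings $\omega_{J_m}$, and the set of $\Delta$ that keep $x_i^*$ optimal is an intersection of half-spaces, hence convex and containing $0$, so with $\sum_m\omega_{J_m}\le 1$ the convex combination still keeps $x_i^*$ optimal. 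Applying the single-update estimate to this combined $\Delta$ then leaves $D$ unchanged.

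The main obstacle is the matching step: the estimate closes only if the minimizer $x_i^*$ produced by Procedure~\ref{alg:MessagePassingUpdate} admits, on every edge $ij$, an equally optimal neighbour configuration with the same image $A_{(i,j)}x_i^* = A_{(j,i)}x_j^*$. This is exactly what marginal consistency supplies, but only through the witnessing subsets $\BS_i$, so one must ensure the minimizer used in the analysis lies in $\BS_i$ --- the analogue of restricting to the ``mutually consistent'' labels in weak tree agreement. The sign structure of~\eqref{equ:allowed-phi} is then what makes every summand in the displayed bound vanish (each term being a product of a fixed-sign $\Delta_{(i,j)}(s)$ with a nonnegative/nonpositive mismatch) rather than merely be controlled.
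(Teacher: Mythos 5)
Your single-step computation is sound, and it is in fact a leaner route than the paper's own argument: the paper proves $D(\phi)=D(\phi+\Delta)$ by showing that a matched pair $x_i^*\in\BS_i$, $x_j^*\in\BS_j$ \emph{remains optimal} after the update, which requires two technical lemmas (Lemma~\ref{lemma:OptimalityAfterUpdate} and the transfer of admissibility between multiple minimizers, Lemma~\ref{lemma:UpdateSignsMultipleMinima}); you instead sandwich $D(\phi+\Delta)$ between the monotonicity lower bound (Lemma~\ref{prop:dual-monotonicity-extended}) and an evaluation upper bound at the matched pair. Two remarks on this part. First, the ``main obstacle'' you flag dissolves inside your own argument: you never need the minimizer computed in~\eqref{eq:FactorOptimization} to lie in $\BS_i$, nor admissibility w.r.t.\ the $\BS_i$-element. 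Weaken ``the increase at $i$ is \emph{exactly} $\sum_{j\in J}\la\Delta_{(i,j)},A_{(i,j)}x_i^*\ra$'' to ``at most'', which holds by simply evaluating the new minimum at any old minimizer $x_i^*\in\BS_i$; the chain still closes, because it is the exact matching $A_{(i,j)}x_i^*=A_{(j,i)}x_j^*$ that makes each summand vanish, not the sign structure of~\eqref{equ:allowed-phi} (the signs enter only the monotonicity direction). The transfer-of-admissibility problem is the \emph{paper's} burden, discharged there by Lemma~\ref{lemma:UpdateSignsMultipleMinima}; your route sidesteps it. Second, a small hole: in the send step the vectors $\Delta^*_{(i,J_1)},\dots,\Delta^*_{(i,J_l)}$ come from separate calls of Procedure~\ref{alg:MessagePassingUpdate}, which may select \emph{different} minimizers in~\eqref{eq:FactorOptimization}, so your claim that their weighted sum is admissible ``relative to the same $x_i^*$'' needs either a common selection or, again, Lemma~\ref{lemma:UpdateSignsMultipleMinima}; the paper's Jensen argument in Theorem~\ref{thm:Algorithm-is-Dual-Monotone} avoids this issue.

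The genuine gap is your opening reduction. The theorem quantifies over the entire run of Algorithm~\ref{alg:AbstractMessagePassing} --- a sequence of updates --- and ``it suffices to prove $D(\phi+\Delta)\le D(\phi)$ for a single admissible $\Delta$'' is false as stated: your one-sided estimate uses marginal consistency of the \emph{current} reparametrization, and after the first update the current point is $\phi+\Delta$, not $\phi$. If marginal consistency were lost there, a later update could strictly increase $D$, and the bound would be improved by the algorithm after all. What is missing is exactly part~(ii) of the paper's proof: the update leaves $\theta^{\phi+\Delta}$ marginally consistent \emph{for the same $\BS$}, which is what turns the one-step statement into a statement about the whole algorithm by induction. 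Your inequalities already contain the fix: since the endpoints of your chain coincide, every inequality in it is tight, so the pair you evaluate at stays optimal after the update; running the chain for every $\hat x_i\in\BS_i$ (matched into each $\BS_j$, $j\in J$, which marginal consistency permits) and, symmetrically, for every $\hat x_j\in\BS_j$ (matched into $\BS_i$), and noting that factors outside $\{i\}\cup J$ are untouched, yields $\BS_k\subseteq\argmin_{x_k\in\SX_k}\la\theta^{\phi+\Delta}_k,x_k\ra$ for all $k\in\BF$, i.e.\ marginal consistency for $\BS$ persists. With that argument added your proof is complete; without it, it establishes a strictly weaker statement than the theorem.
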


First, we need two technical lemmata.
\begin{lemma}
\label{lemma:OptimalityAfterUpdate}
Let $X \subset \{0,1\}^n$, $A \in \{0,1\}^{K \times n}$ and $Ax \in \{0,1\}^K$ $\forall x \in X$.
Let $x^* \in X$ be given and define $\nu^* := Ax^*$.
Let $\Delta \in \R^K$ be given such that $\Delta(s) \begin{cases} \geq 0,& \nu^*(s) =1 \\ \leq 0,& \nu^*(s) = 0 \end{cases}$.
Then (i)~$x^* \in \argmin_{x \in X} \la -\Delta, A x \ra$ 
and (ii)~for $x^{**} \in \argmin_{x \in X} \la -\Delta, A x \ra$, $\nu^{**} = A x^{**}$ it holds that $\Delta(s) = 0$ whenever $\nu^*(s) \neq \nu^{**}(s)$.
\end{lemma}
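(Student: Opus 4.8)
The plan is to reduce everything to a coordinate-wise comparison, exploiting that $Ax$ takes only binary values. First I would rewrite the objective as $\la -\Delta, Ax\ra = -\sum_{s=1}^K \Delta(s)\,(Ax)(s)$, so that minimizing $\la -\Delta, Ax\ra$ over $x \in X$ is the same as maximizing $\sum_s \Delta(s)\,(Ax)(s)$. Since $(Ax)(s) \in \{0,1\}$ by hypothesis, each summand $\Delta(s)\,(Ax)(s)$ is simply the scalar $\Delta(s)$ multiplied by a bit, and the whole problem decouples across coordinates.

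The core observation is that the sign pattern imposed on $\Delta$ is exactly the one that makes $x^*$ maximize each summand independently: when $\nu^*(s)=1$ we have $\Delta(s)\ge 0$, so the value $(Ax)(s)=1$ chosen by $x^*$ gives the largest possible contribution $\Delta(s)$; when $\nu^*(s)=0$ we have $\Delta(s)\le 0$, so the value $(Ax)(s)=0$ again gives the largest possible contribution, namely $0$. I would therefore establish the coordinate-wise inequality $\Delta(s)\,(Ax)(s) \le \Delta(s)\,\nu^*(s)$, valid for every $x \in X$ and every $s$, by checking the two sign cases against $(Ax)(s)\in\{0,1\}$. Summing over $s$ yields $\la \Delta, Ax\ra \le \la \Delta, \nu^*\ra$, equivalently $\la -\Delta, Ax\ra \ge \la -\Delta, Ax^*\ra$, which is precisely statement (i).

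For (ii), I would use that any minimizer $x^{**}$ attains the same optimal value, so $\sum_s \Delta(s)\,\nu^{**}(s) = \sum_s \Delta(s)\,\nu^*(s)$. Because each term already satisfies the coordinate-wise inequality $\Delta(s)\,\nu^{**}(s) \le \Delta(s)\,\nu^*(s)$ from the previous step, equality of the two sums forces termwise equality $\Delta(s)\,\nu^{**}(s) = \Delta(s)\,\nu^*(s)$, that is $\Delta(s)\,(\nu^{**}(s)-\nu^*(s))=0$, for every $s$. In particular $\Delta(s)=0$ whenever $\nu^*(s)\neq\nu^{**}(s)$, which is the claim.

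There is no genuine obstacle here; the only point requiring care is the passage from the summed equality to termwise equality in (ii), which is legitimate precisely because each individual term is bounded above by the corresponding term of the optimal sum, so no cancellation can hide a strict loss in a single coordinate.
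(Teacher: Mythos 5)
Your proof is correct and takes essentially the same approach as the paper's: both exploit that $Ax$ is binary and analyze the sign of $\Delta(s)$ in the two cases $\nu^*(s)=1$ and $\nu^*(s)=0$, the only cosmetic difference being that you state the comparison coordinate-wise ($\Delta(s)\,(Ax)(s) \le \Delta(s)\,\nu^*(s)$) before summing, while the paper groups coordinates into index sets. Part (ii) is argued identically in both: since every term is bounded by the corresponding optimal term, equality of the sums forces termwise equality, hence $\Delta(s)=0$ wherever $\nu^*$ and $\nu^{**}$ disagree.
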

\begin{proof}
Let $x \in X$ and define $\nu = A x$.
Then 
\begin{multline}
\la -\Delta, A x \ra \\
= 
\underbrace{\sum_{s: \nu^*(s) = 1 = \nu(s) } \hspace*{-0.4cm} - \Delta(s)}_{(*)}
+
\underbrace{\sum_{s: \nu(s) = 1 > 0 = \nu^*(s) } \hspace*{-0.6cm} - \Delta(s)}_{(**)} \\
\geq 
\underbrace{\sum_{s : \nu^*(s) = 1}-\Delta(s) }_{(***)} \\
= \la -\Delta, A x^* \ra
\end{multline}
because $(*) \geq (***)$ due to $\Delta(s) \geq 0$ for $\nu^*(s) = 1$ and $(**) \geq 0$ due to $\Delta(s) \leq 0$ for $\nu^*(s) = 0$.
This proves (i) and (ii) is proven by observing that $(**) = 0$ and $(*) = (***)$ must also hold.
\end{proof}

\begin{lemma}
\label{lemma:UpdateSignsMultipleMinima}
Let $x_i^*,x_i^{**} \in \argmin_{x_i \in \SX_i} \la \theta^{\phi}, x_i \ra$ be two solutions to the $i$-th factor for the current reparametrization $\theta^{\phi}$.
If $\Delta$ is admissible w.r.t.\ $x_i^*$ then $\Delta$ is also admissible w.r.t.\ $x_i^{**}$.
\end{lemma}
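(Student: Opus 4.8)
The plan is to unfold the definition of admissibility and reduce the multi-neighbour case to the single-matrix situation of Lemma~\ref{lemma:OptimalityAfterUpdate}. First I would stack the coupling matrices $A_{(i,j)}$, $j\in J$, into one block matrix $\bar A$ and concatenate the blocks $\Delta_{(i,j)}$ into one vector $\bar\Delta$, so that $\theta_i^{\Delta}=\theta_i^{\phi}+\bar A^{\T}\bar\Delta$; by Definition~\ref{def:IPSLP} each block $A_{(i,j)}x$ is a $0/1$ vector, hence so is $\bar A x$, for every $x\in\SX_i$. Setting $\nu^{*}:=\bar A x_i^{*}$ and $\nu^{**}:=\bar A x_i^{**}$, admissibility of $\Delta$ with respect to $x_i^{*}$ says precisely that $\bar\Delta$ follows the sign pattern of $\nu^{*}$ (this is \eqref{equ:allowed-phi} read block by block) and that $x_i^{*}\in\argmin_{x_i\in\SX_i}\la\theta_i^{\Delta},x_i\ra$. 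Since $\Delta$ is the same object throughout, only the reference minimiser changes, and the goal is to derive these same two statements with $\nu^{**}$ in place of $\nu^{*}$.

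The first step is to promote $x_i^{**}$ to a minimiser of the linear surrogate $x\mapsto\la-\bar\Delta,\bar A x\ra$. Since $x_i^{*}$ and $x_i^{**}$ both minimise $\theta_i^{\phi}$ they attain the same value there; subtracting this common value from the inequality $\la\theta_i^{\Delta},x_i^{*}\ra\le\la\theta_i^{\Delta},x_i^{**}\ra$ (which holds because $x_i^{*}$ minimises $\theta_i^{\Delta}$) leaves $\la\bar\Delta,\nu^{*}\ra\le\la\bar\Delta,\nu^{**}\ra$. Conversely, Lemma~\ref{lemma:OptimalityAfterUpdate}(i), applied to $\bar A$, $\bar\Delta$ and $x_i^{*}$, gives $x_i^{*}\in\argmin_{x}\la-\bar\Delta,\bar A x\ra$ and hence $\la\bar\Delta,\nu^{**}\ra\le\la\bar\Delta,\nu^{*}\ra$. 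The two inequalities force $\la\bar\Delta,\nu^{*}\ra=\la\bar\Delta,\nu^{**}\ra$, so $x_i^{**}$ attains the minimum of $\la-\bar\Delta,\bar A x\ra$ as well.

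Now both conclusions follow. For the optimality condition, $\la\theta_i^{\Delta},x_i^{**}\ra=\la\theta_i^{\phi},x_i^{**}\ra+\la\bar\Delta,\nu^{**}\ra=\la\theta_i^{\phi},x_i^{*}\ra+\la\bar\Delta,\nu^{*}\ra=\la\theta_i^{\Delta},x_i^{*}\ra$, so $x_i^{**}$ also minimises $\theta_i^{\Delta}$. For the sign pattern, I would invoke Lemma~\ref{lemma:OptimalityAfterUpdate}(ii) with the co-minimiser $x_i^{**}$: it yields $\bar\Delta(s)=0$ at every coordinate where $\nu^{*}(s)\neq\nu^{**}(s)$. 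At coordinates where $\nu^{*}$ and $\nu^{**}$ agree, $\bar\Delta$ already carries the sign dictated by $\nu^{**}=\nu^{*}$; at the disagreement coordinates $\bar\Delta$ vanishes and so meets the $\le 0$/$\ge 0$ requirement vacuously. Thus $\bar\Delta$ follows the sign pattern of $\nu^{**}$, which block by block is exactly \eqref{equ:allowed-phi} for $A_{(i,j)}x_i^{**}$, and therefore $\Delta\in AD(\theta_i^{\phi},x_i^{**},J)$.

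The delicate point is the transfer of the sign constraints, since a priori the sign pattern of $\bar\Delta$ is tied to the particular minimiser $x_i^{*}$ and nothing forces $\nu^{*}=\nu^{**}$. The resolution — and the only real work — is the sandwich argument of the second paragraph, which upgrades $x_i^{**}$ to a minimiser of $\la-\bar\Delta,\bar A x\ra$; this is precisely the hypothesis that lets Lemma~\ref{lemma:OptimalityAfterUpdate}(ii) pin $\bar\Delta$ to zero wherever the two minimisers disagree. Everything else is bookkeeping with the reparametrisation identity $\theta_i^{\Delta}=\theta_i^{\phi}+\bar A^{\T}\bar\Delta$.
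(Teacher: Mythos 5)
Your proof is correct and follows essentially the same route as the paper's: the sandwich argument (optimality of $x_i^*$ for $\theta_i^{\phi+\Delta}$ versus Lemma~\ref{lemma:OptimalityAfterUpdate}(i)) forces $\la \Delta, A x_i^* \ra = \la \Delta, A x_i^{**} \ra$, which yields optimality of $x_i^{**}$ after the update, and Lemma~\ref{lemma:OptimalityAfterUpdate}(ii) then pins $\Delta$ to zero on coordinates where the two minimizers disagree, transferring the sign pattern. Your explicit stacking of the blocks $A_{(i,j)}$, $j \in J$, into $\bar A$ is just tidier bookkeeping for the multi-neighbour case, which the paper handles with single-edge notation.
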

\begin{proof}
As both $x_i^*$ and $x_i^{**}$ are optimal to $\theta^{\phi}$ and $x_i^*$ is also optimal to $\theta^{\phi + \Delta}$, we have 
$\la \Delta_{(i,j)}, A_{(j,i)} x_i^* \ra \leq \la \Delta_{(i,j)}, A_{(j,i)} x_i^{**} \ra$.
By Lemma~\ref{lemma:OptimalityAfterUpdate}, (i) also
$\la -\Delta_{(i,j)}, A_{(j,i)} x_i^* \ra \leq \la -\Delta_{(i,j)}, A_{(j,i)} x_i^{**} \ra$ holds,
hence equality must hold.
This shows $x^{**}_i \in \argmin_{x_i \in \SX_i} \la \theta^{\phi + \Delta},x_i \ra$.
Second, Lemma~\ref{lemma:OptimalityAfterUpdate}, (ii) implies that $\Delta(s) = 0$ whenever $\nu^*(s) \neq \nu^{**}(s)$.
This proves that 
$ \Delta_{(i,j)}(s) 
\begin{cases}
   \ge 0, & \nu^{**}(s) =1\\
   \le 0, & \nu^{**}(s) =0\\
\end{cases}, \nu^{**}:=A_{(i,j)} x^{**}_i$.
\end{proof}
\begin{proof}[Proof of Theorem~\ref{thm:MarginalConsistencyFixPoint}]
It is sufficient to show that for marginally consistent $\theta^{\phi}$ for $\BS$, the update $\Delta$ computed by Algorithm~\ref{alg:MessagePassingUpdate} on an arbitrary factor $i \in \BF$ and some set $J \subset \SN_{\BG}(i)$ has the following properties:
(i) $L(\phi) = L(\phi + \Delta)$,
(ii) $\theta^{\phi + \Delta}$ is marginally consistent for $\BS$.
For an easier proof, we only consider the case $J = \{j\}$. The general case can be proven analoguously.

(i)
Let $x_i^* \in \BS_i$, $x_j^* \in \BS_j$ with $A_{(i,j)} x_i^* = A_{(j,i)} x_j^*$.
We have to show that 
\begin{equation}
\label{eq:MarginalConsistentMinimaAfterUpdate}
   \min_{x_i \in \SX_i} \la \theta_i^{\phi}, x_i \ra + \min_{x_j \in \SX_j} \la \theta_j^{\phi}, x_j \ra 
   =
   \min_{x_i \in \SX_i} \la \theta_i^{\phi + \Delta}, x_i \ra + \min_{x_j \in \SX_j} \la \theta_j^{\phi + \Delta}, x_j \ra 
\end{equation}
Due to
$x^*_i$ optimal to $\theta_i^{\phi + \Delta}$, since by Lemma~\ref{lemma:UpdateSignsMultipleMinima} the update $\Delta$ is admissible for $x_i^*$,
it remains to show that 
$x_j^* \in \argmin_{x_j \in \SX_j} \la \theta^{\phi + \Delta}, x_j \ra$.
As $x_j^* \in \argmin_{x_j \in \SX_j} \la \theta^{\phi}, x_j \ra$, it is sufficient to prove that 
$x_j^* \in \argmin_{x_j \in \SX_j} \la -\Delta_{(i,j)}, A_{(j,i)} x_j \ra$.
This follows from Lemma~\ref{lemma:OptimalityAfterUpdate}~(i).
We conclude by noting
$\la \theta^{\phi}_i, x_i^* \ra + \la \theta^{\phi}_j x_j \ra = \la \theta^{\phi+\Delta}_i, x_i^* \ra + \la \theta^{\phi+\Delta}_j x_j \ra $.

(ii) 
The computations in (i) show that 
$\BS_i \subseteq \argmin_{x_i \in \SX_i} \la \theta^{\phi + \Delta}_i, x_i \ra$ and
$\BS_j \subseteq \argmin_{x_j \in \SX_j} \la \theta^{\phi + \Delta}_j, x_j \ra$.
The reparametrizations of all other factors stay the same: $\theta^{\phi + \Delta}_k = \theta^{\phi}_k$ for $k \in \BF \backslash \{i,j\}$.
Hence, $\theta^{\phi + \Delta}$ is marginally consistent for $\BS$ after the update.
\end{proof}

\section{Special Cases: Graphical Model Solvers}
\label{sec:SpecialCases}
We will show how Algorithm~\ref{alg:AbstractMessagePassing} subsumes known message-passing algorithms MSD~\cite{Werner07}, TRWS~\cite{TRWSKolmogorov}, SRMP~\cite{SRMPKolmogorov} and MPLP~\cite{MPLP} for MAP-inference with common graphical models, considered in Example~\ref{example:LocalPolytope}.

\myparagraph{Solver Primitives~\eqref{eq:FactorOptimization} and~\eqref{eq:ReverseFactorOptimization}.}
As it can be seen, all factors in~\eqref{eq:ILP:linear-constraints} are of the form 
${\SX_i = \{(1,0,\ldots,0),(0,1,0,\ldots,0),\ldots,(0,\ldots,0,1)\}}$ 
and 
$\conv(\SX_i) = \{ \mu \geq 0 : \la \eins,\mu\ra = 1\}$ is a $\dim_(X_i)$-dimensional simplex.

In all message passing algorithms~\cite{TRWSKolmogorov,SRMPKolmogorov,Werner07,MPLP}, there are two types of invokations of Algorithm~\ref{alg:MessagePassingUpdate} together with solutions of the accompanying optimization problem~\eqref{eq:FactorOptimization} and~\eqref{eq:ReverseFactorOptimization}:
{\small
\begin{tabular}{|@{\hspace{-0pt}}c|c|@{\hspace{-0pt}}c|}
\hline
\parbox[t]{0.15\linewidth}{\centering Alg.~\ref{alg:MessagePassingUpdate} \\ input} & 
\parbox[t]{0.15\linewidth}{\centering Factor\\ Optimization \eqref{eq:FactorOptimization}} & 
\parbox[t]{0.2\linewidth}{\centering Reparametrization\\ adjustment~\eqref{eq:ReverseFactorOptimization}} \\ \hline
\parbox[t]{0.15\linewidth}{\centering ${i = u \in \FV }$ \\ ${ J = \{uv\} }$ \\ ${uv \in \FE}$} &  
\parbox[t]{0.3\linewidth}{\centering ${\min\limits_{x_u \in \SX_u}\{ \theta_u^{\phi}(x_u) \}}$} & 
\parbox[t]{0.35\linewidth}{\centering{
  \begin{multline*}
  \Delta^*_{(u,uv)}(x_u) = \\
    \min_{x'_u \in X_u} \theta_u^{\phi}(x'_u)\\
    -\theta_u^{\phi}(x_u) 
  \end{multline*}
} }
  \\ \hline
\parbox[t]{0.17\linewidth}{\centering ${i = uv \in \FE}$\\ ${J = \{u\}}$\\ ${u \in \FV}$} & 
\parbox[t]{0.3\linewidth}{\centering ${\min\{\theta_u^{\phi}(x_u,x_v) \}}$ \\ ${(x_u,x_v) \in \SX_u \times \SX_v}$} & 
\parbox[t]{0.35\linewidth}{\centering{
  \begin{multline*}
    {\Delta^*_{(uv,u)}(x_u)=} \\
    {\min_{x'_{uv} \in X_{uv}} \theta_{uv}^{\phi}(x'_{uv})} \\
    -\min\limits_{x_v \in \SX_v} \{\theta_{uv}(x_u,x_v)\}
  \end{multline*}   
}}  
  \\ \hline
\end{tabular}
}

\myparagraph{MAP-inference Solvers.}
In Table~\ref{table:MapInferenceSolvers} we state solvers MSD~\cite{Werner07}, TRWS~\cite{TRWSKolmogorov}, SRMP~\cite{SRMPKolmogorov} and MPLP~\cite{MPLP} as special cases of our framework.
Factors are visited in the order they are read in.
\begin{table*}
\begin{tabular}{|ccccc|}
\hline
Algorithm & \pbox{3cm}{Current\\ factor} & $J_{receive}$ & $J_1\dcup \ldots\dcup J_{l}$ & $\omega$ \\ \hline
\multirow{2}{*}{MSD~\cite{Werner07}} 
   & $u \in \FV$ & $\SN_{\BG}(u)$ & $\{uv\} \subset \SN_{\SG}(u)$ & $\omega_1,\ldots = \nicefrac{1}{\abs{\SN_{\FG}(u)}}$\\
& $uv \in \FE$ & $\varnothing$ & --- & --- \\ \hline
\multirow{2}{*}{MPLP~\cite{MPLP}}
& $u \in \FV$ & $\varnothing$ & --- &  --- \\
   & $uv \in \FE$ & $\{u,v\}$ & $\{u\},\{v\}$ & $\omega_1 = \nicefrac{1}{2} = \omega_2$ \\ \hline
\multirow{5}{*}{\pbox{2cm}{ TRWS~\cite{TRWSKolmogorov}\\ SRMP~\cite{SRMPKolmogorov} }}
& \multicolumn{4}{c|}{forward pass:} \\
   & $u \in \FV$ & $\{uv: v \in \SN_{\SG}(u) , v < u \} $ & $ \{uv\}: v \in \SN_{\SG}(u), v > u $ & $\omega_1,\ldots = \nicefrac{1}{\max(\{v \in \SN_{\FG}(u): v>u\},\{v \in \SN_{\FG}(u): v<u\})} $\\
& \multicolumn{4}{c|}{backward pass:} \\
& $u \in \FV$ & $\{uv: v \in \SN_{\SG}(u) , v > u \} $ & $ \{uv\}: v \in \SN_{\SG}(u), v < u $ & $\omega_1,\ldots = \nicefrac{1}{\max(\{v \in \SN_{\FG}(u): v>u\},\{v \in \SN_{\FG}(u): v<u\})} $ \\
& $uv \in \FE$ & $\varnothing$ & --- & --- \\ \hline
\end{tabular}
   \caption{\cite{Werner07,TRWSKolmogorov,SRMPKolmogorov,MPLP} as special cases of Algorithm~\ref{alg:AbstractMessagePassing}.}
\label{table:MapInferenceSolvers}
\end{table*}

\begin{remark}
We have only treated the case of unary $\theta_u, u \in \FV$ and pairwise potentials $\theta_{uv}, uv \in \FE$ here. 
MPLP~\cite{MPLP} and SRMP~\cite{SRMPKolmogorov} can be applied to higher order potentials as well, which we do not treat here.
SRMP~\cite{SRMPKolmogorov} is a generalisation of TRWS~\cite{TRWSKolmogorov} to the higher-order case.
\end{remark}

\begin{remark}
There are convergent message-passing algorithms such that factors comprise trees~\cite{SubproblemTreeCalibrationWang,BlockTreeCoordinateUpdate}.
Their analysis is more difficult, hence we omit it here.
\end{remark}

Note that our framework generalizes upon~\cite{TRWSKolmogorov,SRMPKolmogorov,MPLP,Werner07,BlockTreeCoordinateUpdate,SubproblemTreeCalibrationWang} in several ways:
(i)~Our factors need not be simplices or trees.
(ii)~Our messages need not be marginalization between unary/pairwise/triplet/$\ldots$ factors.
(iii)~We can compute message updates on more than one coupling constraint simultaneously, i.e. we may choose $J_1\dcup\ldots\dcup J_l$ in Algorithm~\ref{alg:AbstractMessagePassing} to be different than singleton sets.
(i)~and~(ii) affect LP-modeling, 
(iii)~affects computational efficiency: By considering multiple messages at once in Procedure~\ref{alg:MessagePassingUpdate}, we may be able to make larger updates $\Delta^*$, resulting in faster convergence.

\end{document}